\documentclass[runningheads]{llncs}
\usepackage[T1]{fontenc}
%
\usepackage{graphicx}
\usepackage{amsmath}
\usepackage{amssymb}
\usepackage[plain,boxed,vlined,linesnumbered,inoutnumbered,rightnl]{algorithm2e}
\usepackage{algorithmic}
\usepackage{multirow}
\usepackage{cleveref}

\newcommand{\chainsum}{\mathsf{sum}}
\newcommand{\cost}{\mathsf{cost}}
\newcommand{\win}{\mathsf{win}}

\newcommand{\smax}{\mathsf{win}^*}
\newcommand{\nmax}{\mathsf{win}^{-}}
\newcommand{\next}{\mathsf{next}}
\newcommand{\Children}{\mathsf{Child}}

%
%

\makeatletter

\makeatother

\begin{document}
\title{Sum-of-Max Chain Partition of a Tree}
%
\author{Ruixi Luo\inst{1} \and Taikun Zhu\inst{1} \and Kai Jin\inst{1}}
%
%
\institute{School of Intelligent Systems Engineering, Sun Yat-Sen University, No. 66, Gongchang Road, Guangming District, Shenzhen, Guangdong, China}

%
\maketitle              
%


\begin{abstract}
Path partition problems on trees have found various applications. 
In this paper, we present an $O(n \log n)$ time algorithm for solving the following variant of path partition problem: given a rooted tree of $n$ nodes $1, \ldots , n$, where vertex $i$ is associated with a weight $w_i$ and a cost $s_i$, partition the tree into several disjoint chains $C_1,\ldots,C_k$, so that the weight of each chain is no more than a threshold $w_0$ and the sum of the largest $s_i$ in each chain is minimized. We also generalize the algorithm to the case where the cost of a chain is determined by the $s_i$ of the vertex with the highest rank in the chain, which can be determined by an arbitrary total order defined on all nodes instead of the value of $s_i$.

\keywords{Chain Partition \and Tree Partition \and Binomial Heap \and Dynamic Programming \and Heap-over-Heap.}
\end{abstract}

\section{Introduction}

Sequence and tree partition problems have been extensively studied in the past decades~\cite{bagheri2024linear,bagheri2010minimum,boesch1974covering,pinar2004fast}. In this paper, we study a variant of the tree partition problems, namely, sum-of-max chain partition of a tree.

Above all, it is necessary to clarify the definition of a chain in the context of tree partitioning. We adopt the chain partition definition as in Misra and Tarjan's work~\cite{misra1975optimal}. A chain in this context is defined as follows:
A path of a rooted tree is a \emph{chain} if for each node pair $(u, v)$ on the path, either $u$ is the ancestor of $v$, or $v$ is the ancestor of $u$.

\begin{quote}\label{problem 1}
	\textbf{Sum-of-max chain partition problem.} 
	Given a rooted tree $T$ of $n$ nodes $1, \ldots , n$, where each vertex $i$ is associated with a weight $w_i$ and a cost $s_i$, partition the tree into several disjoint chains $C_1,\ldots,C_k$, where $k$ is arbitrary, so that the total weight of each chain is no more than a given threshold $w_0$ and the sum of the largest $s_i$ in each chain is minimized. Without loss of generality, we assume that $w_i \leq w_0$ for all $i$.
\end{quote}

Let the largest $s_i$ in each chain be referred to as the \textbf{cost} of the chain. The total cost of a chain partition $P = \bigcup_{i=1}^k C_i$ is the sum of the largest parameter $s$ in each chain, as defined above. Formally, $\cost(P)=\sum_{1\leq i\leq k} \max_{j \in C_i}s_j$. 


\medskip Partitioning a tree into paths has been studied by a number of researchers. In~\cite{yan1997k}, the problem of partitioning a tree into paths with each path having no more than $k$ nodes was solved with a linear time algorithm. The sum-of-max chain partition problem on trees can be seen as a generalization of the problem with variable weights and costs on nodes to some extent. The main contribution of this paper is an $O(n \log n)$ time algorithm for solving sum-of-max chain partition problem on trees mentioned above.

\subsection{Related works}

Misra and Tarjan~\cite{misra1975optimal} investigated a type of chain partition problem 
  where a tree is divided into several chains, each of which does not exceed the weight limit $w_0$, 
    and the objective is to minimize the cost of edges contained in no chains. 
  An $O(n \log n)$ algorithm for this problem was presented in their paper.

The \emph{heavy-light decomposition} technique ~\cite{sleator1981data} introduces a type of chain partition of trees, which finds numerous applications~\cite{alstrup1997finding,buchsbaum2000maintaining,harel1984fast,klein1998computing}. 
Similarly, the \emph{long-path decomposition} technique \cite{bender2004level} defines another chain partition of trees,
      which can be used to speed up several dynamic programming algorithms on trees.
   Both of the mentioned chain partitions can be computed in linear time.

\medskip 
Path partition problems have drawn more attention in the literature~\cite{brause2017relation,chen2019improved,li2024improved,monnot2007path}.
For example, partitioning a graph $G$ into the minimum number of paths has been studied extensively~\cite{boesch1974covering,lu2013path,manuel2018revisiting}. Boesch et al.~\cite{boesch1974covering} provided a simple polynomial algorithm for the special case where $G$ is a tree. This type of partition problem on trees is related to the Hamiltonian completion of a tree~\cite{goodman1974hamiltonian,slater1979path}.  

Some path partition problems of trees with costs related to the number of paths have also been studied in the past. 
Cai et al.~\cite{cai2003} studied $\omega$-path partition of a edge-weighted tree and gave a linear time algorithm for the problem. 
Jin and Li~\cite{jin2008vertex} presented a linear path partition algorithm for edge-colored trees.
Yan et al.~\cite{yan1997k} presented a linear-time algorithm for the $k$-path partition problem in trees,
which considers partitioning a tree into the smallest number of paths such that each path has at most $k$ vertices. 
The path partitions problems can be seen as special cases of sum-of-max path partition problem having uniform costs for all elements.


As a sequence can be treated as a tree without branches, the problem in this paper can also be seen as an extension of the sum-of-max partition problem under a knapsack constraint studied in~\cite{jin2023sum}, where an $O(n)$ time algorithm is given. 

Partition a tree into subtrees has also been investigated extensively. A few references about such problems can be found in the introduction of \cite{jin2023sum}.


\medskip The decision version of our chain partition problem is NP-complete for a directed graph $G$. 
  This is because the chain partition problem can be reduced to the minimum path cover problem when all weights and costs are set to be $1$. 
  Then, the minimum path cover problem can further be reduced to a classic NP-complete problem, namely the Hamiltonian path problem~\cite{hartmanis1982computers}, since a minimum path cover consists of one path if and only if there is a Hamiltonian path in $G$.

\subsection{Applications}

The partition problem on trees can be applied in real-life, such as water environment protection, industrial investment planning, power grid maintenance, biological sample supply and preservation, etc.

In a river environmental protection scenario, we need to send several staff members to clean the river. Each staff member is responsible for a continuous section of the river. Since the river forks, each unbranched river segment can be seen as a vertex of the tree.

Assume that each person can clean a river of length $w_0$ at most, each segment $i$ is associated with length $w_i$ and equipment requirement $s_i$. The equipment that each staff member needs is determined by the largest $s_i$ of the river segments he is responsible for. An effective sum-of-max chain partition guarantees a good staff allocation plan with minimum total equipment requirement.

Our sum-of-max chain partition problem can also be applied to industrial investment. Assume a cooperation focuses on oil related industries. Then, the supply chains based on oil can be seen as a tree, with oil as the root, various consumer goods as leaf vertices and intermediate products between them. Now, the cooperation intends to establish a number of factories, each produces several products on the tree.

In order to ensure the continuity of production, only continuous products on the tree, i.e. a chain, can serve as the supply chain segment for a factory's production. Suppose a product $i$ on the tree requires $w_i$ area of land, and $s_i$ level of pollution. Each factory can occupy no more than $w_0$ area of land and its pollution is determined by the largest $s_i$ of the products it produces. A good sum-of-max chain partition can reduce the total pollution of the factories.

\section{Preliminaries}\label{sect:pre}

Denote the given tree and its subtree rooted at vertex $v$ by $T$ and $T_v$, respectively.  
For each vertex $u$ in $T_v$, 
    denote by $T[v,u]$ the chain from $v$ to $u$,
       by $T[v,u)$ the chain from $v$ to the parent of $u$, and by $T(v,u]$ the chain $T[v,u]$ without $v$.
By this definition, $T[v,v)$ is empty. The set of children of $v$ is denoted by $\Children(v)$.

\begin{definition}\label{def:preliminary}
The \emph{window} of $v$, denoted by $\win_{v}$,
    refers to the set of vertices $u$ in $T_v$ for which the chain $T[v,u]$ weights no more than $w_0$.

A vertex $u$ in $T_v$ is \emph{s-maximal} if each vertex $p$ in $T[v,u)$ admits that $s_p<s_u$. 
For $u\in T_v$ and $u\neq v$, let $\next_v(u)$ be the closest proper s-maximal ancestor of $u$ in $T_v$.
Note that $\next_v(u)$ is well-defined since the root $v$ of $T_v$ is always s-maximal in $T_v$. 

Let $\smax_v$ denote the s-maximal vertices in $\win_v$ and $\nmax_v$ denote vertices in $\win_v\setminus\smax_v$.
\end{definition}

Our algorithm follows a dynamic programming paradigm. 

Let $F[v]$ denote the minimum cost to partition $T_v$ into chains.
Obviously, our goal is to compute $F[r]$, where $r$ denotes the root of $T$.

Let us consider how to find the optimal partition of $T_v$.
Obviously, $v$ must belong to some chain, e.g. $T[v,i]$.
Define $\cost(v,i)$ to be the minimum cost to partition $T_v$ into chains, one of which equals $T[v,i]$. Then, we have

\begin{equation}\label{eqn:F-cost}
	F[v]=\min_{\text{$T[v,i]$ weights no more than $w_0$}} \cost(v,i).
\end{equation}

\begin{figure}[h]
\begin{minipage}{0.5\textwidth}
	\centering
	\includegraphics[width=0.57\textwidth]{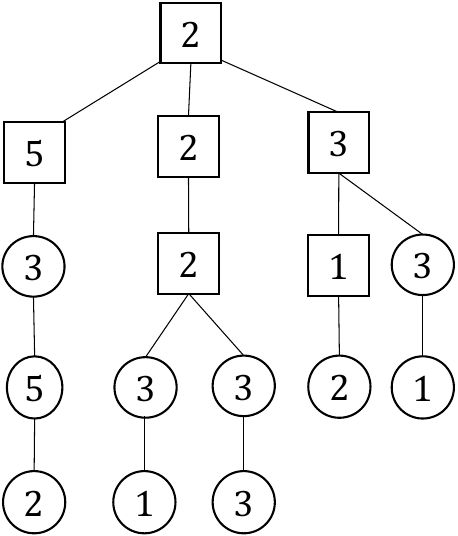} (a)

\end{minipage}
\begin{minipage}{0.5\textwidth}
	\includegraphics[width=0.66\textwidth]{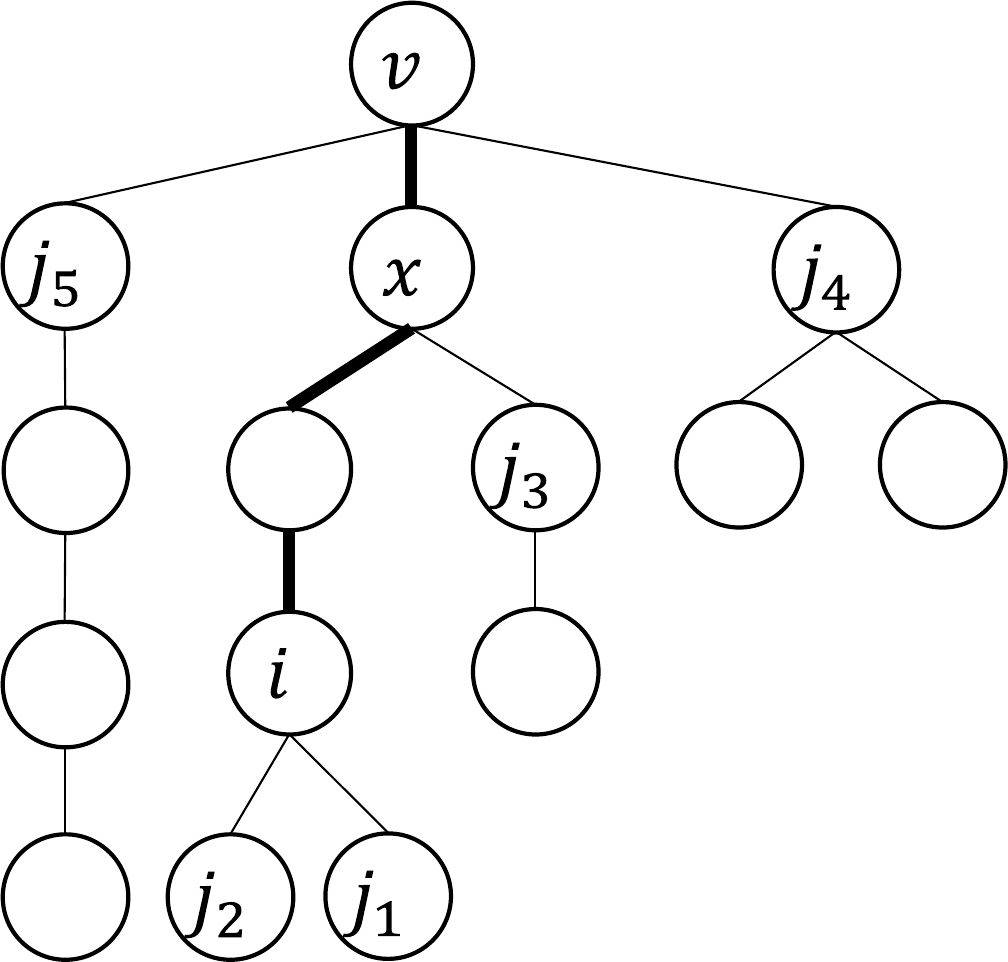} (b)
\end{minipage}
	\caption{(a) Illustration for window and outer boundary, where $w_0=7$. 
        The numbers in the vertices are their weights.
        Squares represent the vertices in the window.\\
	(b) Illustration for $\chainsum(v,i)$. Here, $\chainsum(v,i)= F[j_1]+F[j_2]+F[j_3]+F[j_4]+F[j_5]$.}\label{fig:pic-1}
\end{figure}

Recall Definition~\ref{def:preliminary}. We rewrite equation~(\ref{eqn:F-cost}) as:
\begin{equation}\label{eq:F_pre}
	F[v]=\min_{i\in \win_v}\cost(v,i).
\end{equation}

Recall $\next$ in Definition~\ref{def:preliminary}. A trivial formula of $\cost(v,i)$ is as follows: 
\begin{equation}\label{eq:cost}
	\cost(v,i)= 
	\begin{cases} 
		\chainsum(v,i)+s_{i} & \text{if } i \in \smax_v\\
		\chainsum(v,i)+s_{\next_v(i)} & \text{if } i \in \nmax_v,
	\end{cases}
\end{equation}
where
\begin{equation}\label{eq:chainsum}
	\chainsum(v,i):= \sum_{j\textnormal{'s parent} \in T[v,i], j\notin  T[v,i]}F[j].
\end{equation}

See Figure~\ref{fig:pic-1}~(b) for an illustration of $\chainsum(v,i)$.

By the definitions above, we can derive the following lemma, which is crucial for the design and analysis in this paper.

\begin{lemma}\label{lem:cost substract}
 \begin{enumerate}
  Assume $i\in \nmax_v$ and $x=\next_v(i)$. Then,
    \begin{equation}\label{eqn:cost_substract_1}
        i\in\nmax_x\text{ and }x=\next_x(i),
    \end{equation}
  and more importantly,
      \begin{equation}\label{eqn:cost_substract}
           \cost(v,i)-\cost(x,i)=\cost(v,x)-\cost(x,x).
       \end{equation}
  \end{enumerate}
\end{lemma}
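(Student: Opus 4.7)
The plan is first to establish a key structural fact and then derive both parts from it. Under the hypothesis $i \in \nmax_v$ with $x = \next_v(i)$, I claim $s_x = \max_{y \in T[v,i]} s_y$. A vertex $u$ on $T[v,i]$ is s-maximal in $T_v$ exactly when $s_u$ is a strict running maximum of the $s$-values along the chain from $v$; hence at $x$ this running maximum equals $s_x$. I would then argue by induction along $T(x,i]$ that no later vertex exceeds $s_x$: the first offender $y$ would itself become a strict new running maximum, producing an s-maximal vertex strictly between $x$ and $i$ (contradicting $x = \next_v(i)$) if $y \in T(x,i)$, or making $i$ itself s-maximal (contradicting $i \in \nmax_v$) if $y = i$. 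In particular $s_x \geq s_i$ and $s_x \geq s_y$ for every $y \in T(x,i)$.

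Given this fact, (\ref{eqn:cost_substract_1}) follows quickly. $T[x,i] \subseteq T[v,i]$ has weight at most $w_0$, giving $i \in \win_x$. Since $x \in T[x,i)$ and $s_x \geq s_i$, vertex $i$ is not s-maximal in $T_x$, so $i \in \nmax_x$. For any $y \in T(x,i)$, we have $x \in T[x,y)$ and $s_x \geq s_y$, so $y$ is not s-maximal in $T_x$ either; the root $x$ of $T_x$ is trivially s-maximal, so $x$ is the closest s-maximal proper ancestor of $i$ in $T_x$, i.e.\ $x = \next_x(i)$.

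For (\ref{eqn:cost_substract}), I would apply (\ref{eq:cost}) to all four cost terms. Using $i \in \nmax_v$ with $\next_v(i) = x$, the freshly proved $i \in \nmax_x$ with $\next_x(i) = x$, and $x \in \smax_v$ together with $x \in \smax_x$ (trivially, as the root of $T_x$), each of the four costs takes the form $\chainsum + s_x$, so the $s_x$ terms cancel and it suffices to show
\begin{equation*}
\chainsum(v,i) - \chainsum(x,i) = \chainsum(v,x) - \chainsum(x,x).
\end{equation*}
Expanding (\ref{eq:chainsum}) with the decomposition $T[v,i] = T[v,x] \cup T(x,i]$, I expect both sides to collapse to the same expression $\sum_{u \in T[v,x)} \sum_{j \in \Children(u) \setminus T[v,x]} F[j]$: on the left, the contributions indexed by off-path children of vertices in $T(x,i)$, by $\Children(x) \setminus T[v,i]$, and by $\Children(i)$ appear identically in $\chainsum(v,i)$ and $\chainsum(x,i)$ and therefore cancel; on the right, the full $\sum_{j \in \Children(x)} F[j]$ (all children of $x$ are off $T[x,x]$, and all are off $T[v,x]$) cancels between $\chainsum(v,x)$ and $\chainsum(x,x)$, leaving only the contribution from vertices of $T[v,x)$.

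The main obstacle will be the bookkeeping in this last step, because which children are ``on-path'' differs across the four chainsums, and the boundary cases $x=v$, $T(x,i) = \emptyset$, and $i$ being a leaf need to be accommodated. I will handle this by explicitly partitioning each $\chainsum$ into four groups of contributions, indexed by parents in $T[v,x)$, $\{x\}$, $T(x,i)$, and $\{i\}$ respectively, and matching the groups across the identity term by term.
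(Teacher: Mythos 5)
Your proposal is correct and follows essentially the same route as the paper's proof: establish that $s_x\geq s_p$ for all $p\in T[v,i]$, deduce (\ref{eqn:cost_substract_1}), cancel the common $s_x$ term via (\ref{eq:cost}), and reduce to an identity between differences of $\chainsum$ values. The only cosmetic difference is that the paper expresses both differences as $\chainsum(v,p_x)-F[x]$ (with $p_x$ the parent of $x$), whereas you expand them into the equivalent sum over off-path children of vertices in $T[v,x)$.
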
 
 
\begin{proof}
	We may assume $v\neq x$ (otherwise it is trivial).
	
	Since $i\in \nmax_v$ and $x=\next_v(i)$, we know that $s_x\geq s_p$ for $p\in T[v,i]$. 
    In particular, $s_x\geq s_p$ for $p\in T[x,i]$. Consequently, in $T[x,i]$, only $x$ is s-maximal in $T_x$ and thus 
    we obtain equation~(\ref{eqn:cost_substract_1}).
    
    Further applying equation~(\ref{eq:cost}), we get
         $$\left\{
             \begin{array}{ll}
               \cost(v,i)=\chainsum(v,i)+s_x \\
               \cost(x,i) = \chainsum(x,i)+s_x.
             \end{array}
           \right.
         $$

    Let $p_x$ denote the parent of $x$. 
	\begin{equation}\label{eq:cost-cost}
		\begin{aligned}
			\cost(v,i)-\cost(x,i)&= (\chainsum(v,i)+s_x)-(\chainsum(x,i)+s_x)\\
			&= \chainsum(v,i)-\chainsum(x,i) \\
            &= \chainsum(v,p_x)-F[x].
		\end{aligned}
	\end{equation}
    The last equation follows from the definition~(\ref{eq:chainsum}).

	\smallskip Similarly, we have	
	\begin{equation}\label{eq:cost-cost2}
		\begin{aligned}
			\cost(v,x)-\cost(x,x)&= (\chainsum(v,x)+s_x)-(\chainsum(x,x)+s_x)\\
            &= \chainsum(v,x)-\chainsum(x,x) \\
            &= \chainsum(v,p_x)-F[x].
		\end{aligned}
	\end{equation}
	
	Equations~(\ref{eq:cost-cost}) and (\ref{eq:cost-cost2}) together imply equation~(\ref{eqn:cost_substract}). \qed
\end{proof}

We will use a few heaps to compute the minimum cost $F[v]$. For simplicity and time complexity, all heaps in this paper refer to \emph{binomial heaps}~\cite{Vuillemin1978A}.
  (The \emph{leftist tree} \cite{mehta2004handbook} is not appropriate for our purpose.)

\newcommand{\findmin}{\textnormal{find-min}}
\newcommand{\findmax}{\textnormal{find-max}}
\newcommand{\ins}{\textnormal{insert}}
\newcommand{\deletevertex}{\textnormal{delete-vertex}}
\newcommand{\meld}{\textnormal{meld}}
\newcommand{\addall}{\textnormal{add-all}}
\newcommand{\keyvalue}{\textnormal{key-value}}

\begin{table}[h]
	\centering
	\begin{tabular}{||c c c||} 
		\hline
		Operation & Definition & Time required\\ [0.5ex] 
		\hline\hline
		find-min(max)($H$) & finds the minimal(maximal) element in heap $H$. & $O(1)$\\ 
		insert($H$,$i$,$k$) & inserts vertex $i$ in heap $H$ with key $k$. & $O(\log n)$\\
		delete-vertex($H$,$i$) & deletes vertex $i$ in heap $H$ (if exists). & $O(\log n)$\\
		meld($H_1$,$H_2$) & melds heap $H_1$ with $H_2$ into $H_1$. & $O(\log n)$\\
		add-all($H$,$v$) & adds value $v$ to all key values in $H$. & $O(\log n)$\\
		key-value($H$,$i$) & returns key value of $i$ in $H$ (if exists). & $O(\log n)$\\[1ex] 
		\hline
	\end{tabular}
	\vspace{2mm}
	\caption{Operations for the heaps.}
	\label{table:operation}
\end{table}

Operations shown in Table~\ref{table:operation} are needed for the binomial heaps in the algorithm. Most of the operations are originally available for the binomial heaps~\cite{Vuillemin1978A}. Additionally, we implement a lazy tag technique to create an add-all() operation in Table~\ref{table:operation}. See appendix~\ref{app:lazy_tag} for more details about this technique.

\newcommand{\bH}{\mathbb{H}}

\section{Algorithm for finding the optimal chain partition}\label{sect:alg}

Recall equation~(\ref{eq:F_pre}). This section shows how we compute $F[v]$ efficiently.

For computing $F[v]$, we use an interesting combined data structure, which we call \emph{heap-over-heap}.
Briefly, there are several disjoint heaps in the first layer (see $H_{x}^{(v)}$'s below),
  and the minimum elements of all heaps in the first layer are maintained in another heap (see $\bH^{(v)}$ below) at the second layer. 

\begin{description}
	\item [Min-heap $H_{x}^{(v)}$ for $x \in \smax_v$.] ~\\
            $H_{x}^{(v)}$ stores $i$ with key $\cost(x,i)$, if ($i=x$) or ($i \in \nmax_v$ and $\next_v(i)=x$). \smallskip
            
		\item [Min-heap $\bH^{(v)}$.] ~\\
             $\bH^{(v)}$ stores $i=\text{find-min}(H_{x}^{(v)})$ with key $\cost(v,i)$, for each $x \in \smax_v$. 
\end{description}

Each $i\in \win_v$ belongs to exactly one heap in $\{ H_{x}^{(v)}: x \in \smax_v\}$. See Figure~\ref{fig:heaps}.

\begin{remark}
Be aware that $i$ in $H_{x}^{(v)}$ is associated with a key value $\cost(x,i)$ instead of $\cost(v,i)$.
This is crucial to the entire algorithm.
  If $i$ is associated with $\cost(v,i)$ in the first layer, it is not easy to update the values when $v$ is changed.
\end{remark}

\begin{figure}[hb]
	\centering
	\includegraphics[width=\textwidth]{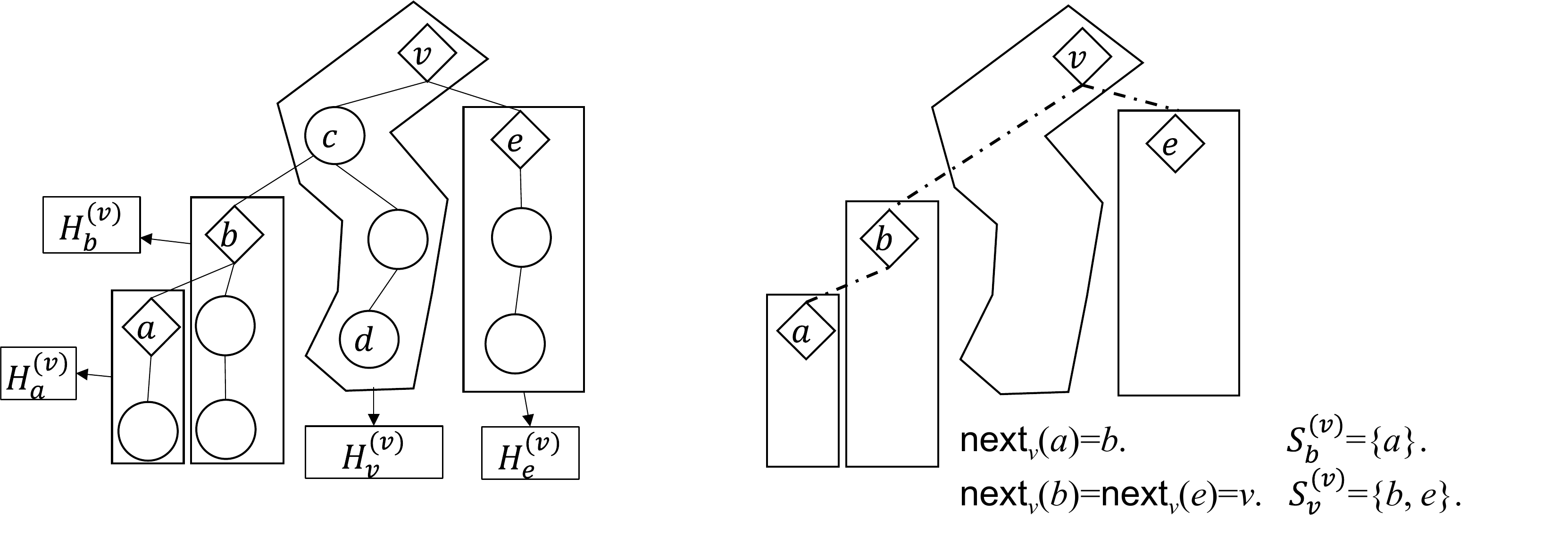}
	\caption{A partition of $\win_v$ into different parts, with each part organized by a heap.
        In this figure, rhombuses indicate s-maximal vertices and circles indicate other vertices.}\label{fig:heaps}
\end{figure}

\begin{lemma}\label{lem:findmin} 
$\arg\min_{i\in \win_v}\cost(v,i)=\findmin(\bH^{(v)})$.
In other words, computing $F[v]=\min_{i\in \win_v}\cost(v,i)$ reduces to computing the top element of $\bH^{(v)}$.
\end{lemma}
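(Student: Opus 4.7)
The plan is to exploit the fact that the first-layer heaps $\{H_{x}^{(v)} : x\in \smax_v\}$ form a partition of $\win_v$ (as already noted in the excerpt), and to show that within any single group $H_{x}^{(v)}$ the ranking induced by the stored keys $\cost(x,i)$ agrees with the ranking induced by the true objective $\cost(v,i)$, so that the local minima picked up by $\bH^{(v)}$ are also the local minima of $\cost(v,\cdot)$.

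First I would write
\[
\min_{i\in\win_v}\cost(v,i) \;=\; \min_{x\in\smax_v}\,\min_{i\in H_{x}^{(v)}}\cost(v,i),
\]
which follows immediately from the partition property. The task is then reduced to proving, for each fixed $x\in\smax_v$,
\[
\arg\min_{i\in H_{x}^{(v)}} \cost(v,i) \;=\; \arg\min_{i\in H_{x}^{(v)}} \cost(x,i) \;=\; \findmin(H_{x}^{(v)}).
\]

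The crux is to show that $\cost(v,i)-\cost(x,i)$ is a constant (depending only on $v$ and $x$) as $i$ ranges over $H_{x}^{(v)}$. For $i=x$ this difference is $\cost(v,x)-\cost(x,x)$ trivially. For $i\neq x$, by construction $i\in\nmax_v$ and $\next_v(i)=x$, so Lemma~\ref{lem:cost substract} directly yields
\[
\cost(v,i)-\cost(x,i) \;=\; \cost(v,x)-\cost(x,x),
\]
the same constant. Hence $\cost(v,\cdot)$ and $\cost(x,\cdot)$ differ by a constant on $H_{x}^{(v)}$, so they share the same minimizer, namely $\findmin(H_{x}^{(v)})$.

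Finally, by definition $\bH^{(v)}$ stores exactly these group minima $\findmin(H_{x}^{(v)})$ keyed by their true $\cost(v,\cdot)$ values, so $\findmin(\bH^{(v)})$ returns the global argmin of $\cost(v,i)$ over $i\in\win_v$, proving the lemma. The only non-routine step is the constant-difference argument within a group, but this is essentially handed to us by Lemma~\ref{lem:cost substract}; the main thing to be careful about is verifying that the $i=x$ case fits into the same formula, which it does trivially.
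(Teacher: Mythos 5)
Your proof is correct and follows essentially the same route as the paper's: both reduce the claim, via Lemma~\ref{lem:cost substract}, to the fact that $\cost(v,\cdot)$ and $\cost(x,\cdot)$ differ by the constant $\cost(v,x)-\cost(x,x)$ on each group $H_{x}^{(v)}$, so each group's key-minimum is its true $\cost(v,\cdot)$-minimum and the second-layer heap then picks out the global argmin. Your explicit check that the $i=x$ case fits the same constant-difference formula is a small point of added care that the paper's proof passes over silently.
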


\begin{proof}

    Take an arbitrary $i \in \win_v$. 
    Assume $i$ is in $H_{x}^{(v)}$. 

    Let $i' = \findmin(H_{x}^{(v)})$. We have $\cost(x,i) \geq \cost(x,i')$.
	
	Since $i,i'$ are in $H_{x}^{(v)}$, respectively, applying Lemma~\ref{lem:cost substract}, 
	\begin{eqnarray*}
	   \cost(v,i) - \cost(x,i) &=& \cost(v,x)-\cost(x,x) \\
	   \cost(v,i') - \cost(x,i') &=& \cost(v,x)-\cost(x,x).
	\end{eqnarray*}
    Therefore, $\cost(v,i)-\cost(v,i')=\cost(x,i)-\cost(x,i')\geq 0$, namely, $\cost(v,i)\geq \cost(v,i')$.
	Since $a = \findmin(\bH^{(v)})$ and $i'$ is also in $\bH^{(v)}$, we have $\cost(v,i') \geq \cost(v,a)$.
    Thus for any $i \in \win_v$, $\cost(v,i) \geq \cost(v,a)$. 
	
	Further since $a\in \win_v$ (trivial), we have $a=\arg\min_{i\in \win_v}\cost(v,i)$.\qed
\end{proof}

For maintaining the heap $\bH^{(v)}$ over heaps $H_{x}^{(v)}$'s structure efficiently, we have to maintain two more types of heaps:
(1) A heap (see $W^{(v)}$ below), which is used for detecting the vertices leaving the window $\win_v$.
(2) A bunch of heaps (see $S_{x}^{(v)}$'s below), which are used for organizing the s-maximal vertices.

\begin{description}
\item [Max-heap $W^{(v)}$.]~\\ 
            $W^{(v)}$ stores $i$ with key $\sum_{j\in T[v,i]}w_j$, for each $i \in \win_v$.\smallskip

\item [Min-heap $S_{x}^{(v)}$ for $x \in \smax_v$.] ~\\
            $S_{x}^{(v)}$ stores $i$ with key $s_i$,  for each $i \in \smax_v$ with $\next_v(i)=x$. See Figure~\ref{fig:heaps}.
\end{description}

\begin{remark} 
In our algorithm, for each $x$,
  we do not have to store all the $H_x^{(v)}$'s one by one for each $v$ with $x\in \smax_v$.
Instead, all such $H_x^{(v)}$'s can share one entity in the memory, denoted as $H_x$.
The notation $H_x^{(v)}$ can be viewed as a historical version of $H_x$ at the stage of computing $v$.
The same logic holds for $S_x^{(v)}$.
\end{remark}

\begin{figure}[h]
	\centering
	\includegraphics[width=0.95\textwidth]{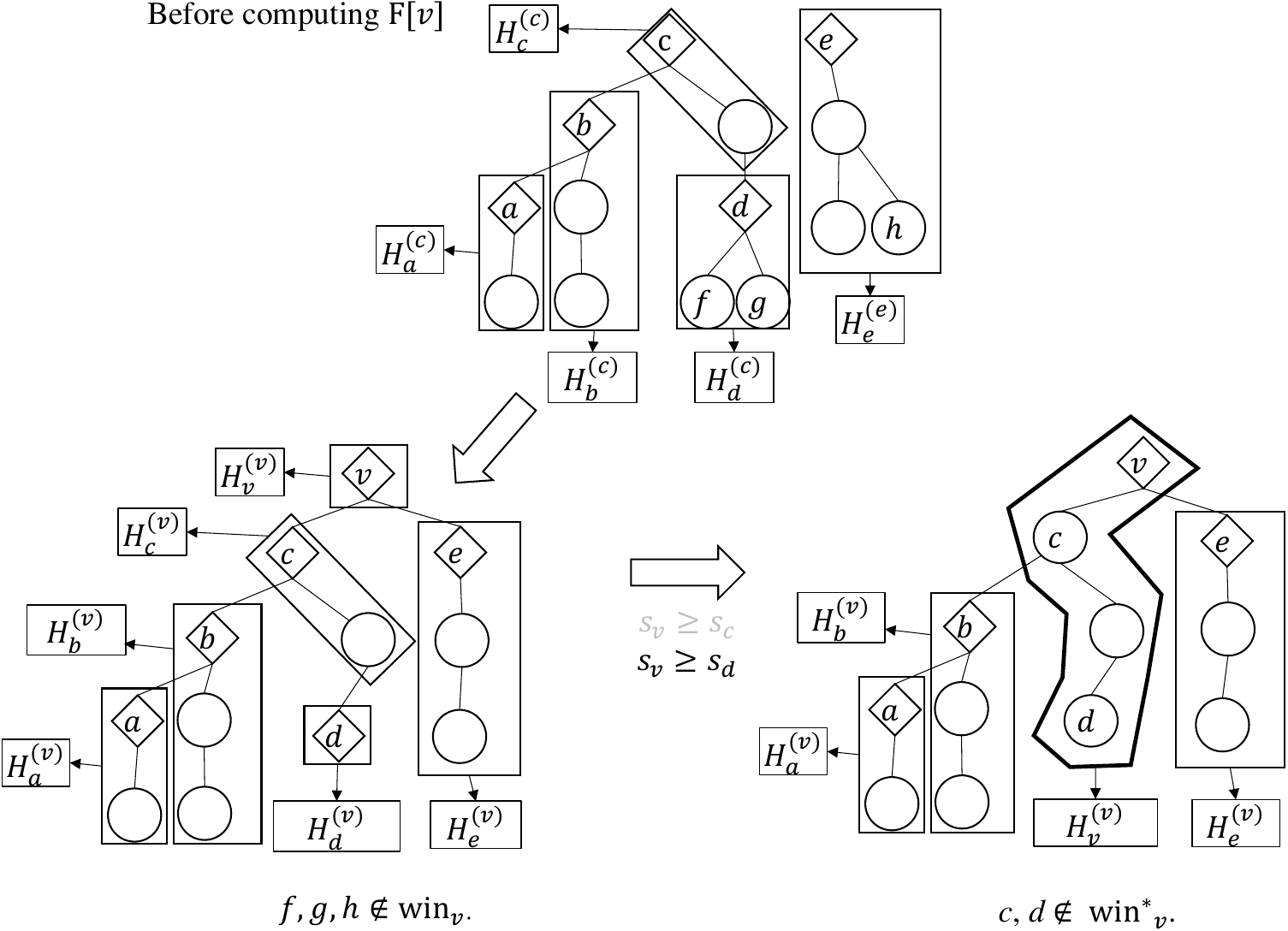}
	\caption{Illustration of the change of the heaps at the stage for computing $F[v]$.}\label{fig:alg_1}
\end{figure}

\subsection{Maintain the heaps efficiently}\label{sect:main}

We compute $\{F[v]\}$ from bottom to up. Suppose $F[v]$ is to be computed in the algorithm. 
It means that we have already computed the $F[c]$ for all $c\in \Children(v)$ and built all the aforementioned heaps related to $T_c$.
Our current task is to obtain the heaps related to $T_v$ from those heaps built for $T_c$'s. 

Consider an example as shown in Figure~\ref{fig:alg_1}. At the stage for computing $F[v]$, 
\begin{enumerate}
\item vertices $f,g,h$ are no longer in the window (namely, $f,g,h\in \win_c \setminus \win_v$),
        and hence should be eliminated from the heaps containing them;
\item vertices $c,d$ are no longer s-maximal (as $s_v\geq s_c$ and $s_v\geq s_d$),
        and hence $\{S_{x}^{(v)}: x\in \win^*_v\}$ changes accordingly;
        this leads to some merges of heaps in $\{H_{x}^{(v)}: x\in \win^*_v\}$, and 
           furthermore, $\bH^{(v)}$ should be updated accordingly.
\end{enumerate}

From the example, we see two major procedures are required at the moment: 
\begin{enumerate}
\item[\textbf{(a)}] Find $i\in \win_c\setminus \win_v$. Then, eliminate them from corresponding heaps.

\item[\textbf{(b)}] Find $x\in \win^*_c\setminus \win^*_v$ (it loses s-maximality due to $s_x\leq s_v$), and
            find those $i$ for which $\next_c(i)=x$ (due to $\next_v(i)=v\neq x$) (by utilizing $S_{x}^{(v)}$). 
          Then, modify $\{S_{x}^{(v)}: x\in \win^*_v\}$ and the heap-over-heap structure accordingly.
\end{enumerate}

 We present the main process of our algorithm as Algorithm~\ref{alg:compute_F(v)} below.

\begin{algorithm}[h]
	\caption{compute\_$F(v)$}\label{alg:compute_F(v)}\label{alg:main}
	\For {$c \in \Children(v)$}{
		$F[c] \leftarrow$ compute\_$F(c)$\;
	}
	$a\leftarrow \sum_{c\in \Children(v)}F[c]$\;
    $H_{v}^{(v)}$, $\bH^{(v)}$, $S_{v}^{(v)}$,$W^{(v)}$ $\leftarrow \emptyset$\;
	$\ins(H_{v}^{(v)},v,s_v+a)$\;\label{line:ins_i_Hfv}
	\For {$c \in \Children(v)$}{\label{line:1:Cv-s}
		$\ins(S_{v}^{(v)},c,s_c)$\;\label{line:ins_i_Sv}
		$\addall(W^{(v)},w_v)$\;
		Over-weight($c$)\;\label{line:Process_Over-weight()}
		$\meld(W^{(v)},W^{(v)})$\;\label{line:meld_W_i}
		$b\leftarrow a-F[c]$\;
		$\addall(\bH^{(c)},b)$\;\label{line:update_sum_m_s}
		$\meld(\bH^{(v)},\bH^{(c)})$\;\label{line:1:Cv-t}
	}
	Update-s-maximal($v$)\;\label{line:Process_Update-s-maximal()}
	$\ins(W^{(v)},v,w_v)$\;\label{line:ins_w_v}
	$\ins(\bH^{(v)},\findmin(H_{v}^{(v)}),\keyvalue(H_{v}^{(v)},\findmin(H_{v}^{(v)})))$\;
	\Return $\keyvalue(\bH^{(v)},\findmin(\bH^{(v)}))$\;
\end{algorithm}

In the algorithm, we first add $v$ into $H_{v}^{(v)}$ with key $s_v + \sum_{c\in \Children(v) }F[c]$ for initialization of $v$. Then, for every $c\in \Children(v)$, we do the following:

\begin{itemize}
	\item First, add $c$ to $S_{v}^{(v)}$ as part of the initialization of $S_{v}^{(v)}$, since $c$ is s-maximal by default in $T_c$. We assume for the time being $c$ is still s-maximal in $T_v$ before we check $s_i$ in the following process. Therefore, $c$ first goes to $S_{v}^{(v)}$.
	
\smallskip
	\item We then add $w_v$ to all keys in $W^{(v)}$, so that the vertices in $W^{(v)}$ with key greater than $w_0$ are vertices in $\win_c\setminus\win_v$ which need to be deleted (recall procedure (a)).
    We call a function Over-weight($c$) to delete the corresponding vertices from all heaps; see subsection~\ref{sect:OB} for details.
    After Over-weight($c$), we meld $W^{(c)}$ into $W^{(v)}$.

\smallskip
	\item Next, we meld $\bH^{(c)}$ into $\bH^{(v)}$ as a part of the initialization of $\bH^{(v)}$. By Lemma~\ref{lem:cost substract}, since $c$ is considered to be s-maximal in $T_c$ by default, we add $(\sum_{c\in \Children(v)}F[c])-F[i]$ to all vertices in $\bH^{(c)}$ to update the $\chainsum(c,*)$ of every key to $\chainsum(v,*)$ before melding. As for updating the $s_i$ part of the costs, we leave this issue to function Update-s-maximal($v$) later.
\end{itemize}

After the above process for all $c\in\Children(v)$, heap $S_{v}^{(v)}$ contains all $\Children(v) \cap \win_v$. $W^{(v)}$ and all $H$'s only contain vertices in $\win_v$, which are the only vertices that we need for the heaps of $T_v$.

We now focus on the whole $T_v$. Here, we consider the vertices that lose its s-maximal identity after adding $v$ as the new root (as mentioned in procedure (b)). We process these vertices in Update-s-maximal($v$) with consecutive updating and melding (See details in appendix~\ref{sect:s-max}). 

After Update-s-maximal($v$), we add $v$ into $W^{(v)}$ with key $w_v$ for further recursive calls and insert the top vertex of new $H_{v}^{(v)}$ into $\bH^{(v)}$. At this point, we have all heaps meeting their definitions in $T_v$ (see subsection~\ref{sect:correct} for details), and all we need is to extract minimum choice from $\bH^{(v)}$ to obtain $F[v]$.

\subsection{Analysis}



\begin{lemma}\label{lemma:time_Pm}
	\textnormal{Over-weight()} takes $O(n\log n)$ time in total.
\end{lemma}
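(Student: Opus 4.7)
The plan is to prove the bound by a charging/amortization argument: show that the total number of vertex deletions performed across all calls to Over-weight() is $O(n)$, and that each such deletion triggers only a constant number of binomial-heap operations, each costing $O(\log n)$.

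The key monotonicity observation to establish first is that once a vertex leaves the window, it never reenters. Concretely, if $i \in T_c \subseteq T_v$ and $i \notin \win_v$, then for every proper ancestor $u$ of $v$ we still have $i \notin \win_u$, because $\sum_{j \in T[u,i]} w_j \geq \sum_{j \in T[v,i]} w_j > w_0$. Consequently, when Over-weight($c$) removes a vertex $i$ because its key in $W$ exceeds $w_0$, no later Over-weight call will see $i$ again. This bounds the total number of removals across all calls by $n$.

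Next I would specify exactly what Over-weight($c$) does per removal, based on the setup immediately preceding the lemma: it repeatedly takes $\findmax(W^{(c)})$, and while this maximum exceeds $w_0$, it deletes that vertex $i$ from $W^{(c)}$ and from the first-layer heap $H_x$ containing it; if $i$ happened to be the representative $\findmin(H_x)$, then $\bH$ must also be repaired with one delete-vertex (of the old representative) and one insert (of the new one). In all cases this is a constant number of binomial-heap operations, each costing $O(\log n)$ by Table~\ref{table:operation}. Hence the per-removal cost is $O(\log n)$.

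Summing, if $k_c$ denotes the number of vertices removed in the call Over-weight($c$), the total work is $\sum_c \bigl(k_c \cdot O(\log n) + O(1)\bigr)$, where the $O(1)$ term accounts for the one unsuccessful find-max that terminates each call. Since $\sum_c k_c \leq n$ by the monotonicity argument and the number of calls is at most $n$, this sums to $O(n \log n) + O(n) = O(n \log n)$.

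The only genuinely delicate part will be verifying that a single removal really incurs only $O(1)$ heap touches rather than propagating further — in particular, that changing the representative of one $H_x$ necessitates only one insert/delete pair in $\bH$ and not a cascade. This follows because the heap-over-heap keeps exactly one representative per s-maximal class $x$, and Over-weight does not alter the partition $\{\smax_v, \nmax_v\}$; the s-maximality updates are handled separately in Update-s-maximal($v$). Once this is pinned down, the amortization is routine.
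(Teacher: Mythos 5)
Your proposal is correct and follows essentially the same route as the paper: the paper's proof also rests on the monotonicity fact (its Lemma~\ref{lem:once}(i), proved by the same weight-sum inequality) that a vertex exits the window exactly once, and then charges each exit a constant number of $O(\log n)$ binomial-heap operations (delete from $W$, from $H_{\next_c(x)}^{(v)}$, from $S_{\next_c(x)}^{(v)}$, plus the possible representative swap in $\bH^{(c)}$). The only cosmetic difference is that you do not explicitly mention the deletion from the $S$-heap, but your blanket claim of $O(1)$ heap touches per removal subsumes it.
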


\begin{proof}
	See appendix~\ref{sect:OB} along with the description of Over-weight().
\end{proof}

\begin{lemma}\label{lemma:time_Ps}
	\textnormal{Update-s-maximal()} takes $O(n\log n)$ time in total.
\end{lemma}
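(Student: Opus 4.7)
The plan is to amortize the cost of Update-s-maximal over the transitions of individual vertices from s-maximal to non-s-maximal. The pivotal structural fact is that every vertex $x$ undergoes at most one such transition across the whole recursion. Indeed, suppose $x\in\smax_c$ for some $c\in\Children(v)$ but $x\notin\smax_v$ at the moment Update-s-maximal is invoked for $v$; then there is some $p\in T[v,x)$ with $s_p\geq s_x$. Because $x\in\smax_c$ forces $s_q<s_x$ for every $q\in T[c,x)$, the only candidate witness is $p=v$, so $s_v\geq s_x$. For any strictly higher ancestor $u$ of $v$, the chain $T[u,x)$ still contains $v$, so $s_v\geq s_x$ continues to witness that $x\notin\smax_u$. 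Hence the total number of transitions across all invocations of Update-s-maximal is bounded by $n$.

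Next I would argue that the per-transition cost is $O(\log n)$. Since $S_v^{(v)}$ is a min-heap keyed by $s_i$ (Section~\ref{sect:main}), Update-s-maximal($v$) locates transitioning vertices by repeatedly peeking at $\findmin(S_v^{(v)})$ and extracting while its key is at most $s_v$. Each extracted vertex $c$ triggers a constant bundle of binomial-heap operations: meld $H_c^{(v)}$ into $H_v^{(v)}$ (so that elements formerly satisfying $\next_c(\cdot)=c$ now satisfy $\next_v(\cdot)=v$), delete the obsolete representative of $H_c^{(v)}$ from $\bH^{(v)}$ and insert the new representative of $H_v^{(v)}$, and recursively peel $S_c^{(v)}$ in the same fashion before melding its surviving tail back into $S_v^{(v)}$ so that the survivors correctly register their new $\next_v(\cdot)=v$. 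Each individual binomial-heap operation is $O(\log n)$ by Table~\ref{table:operation}.

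Multiplying the $O(n)$ total transitions by the $O(\log n)$ per-transition work yields the claimed $O(n\log n)$ bound. The main subtlety I expect to wrestle with is ruling out double-counting when the survivors of an inner $S_c^{(v)}$ are melded back into $S_v^{(v)}$: the outer peeling loop must not re-extract such a survivor. This is guaranteed because survivors carry $s$-key strictly greater than $s_v$ by construction, so after the meld the minimum of $S_v^{(v)}$ restricted to survivors exceeds $s_v$, and the outer loop terminates correctly as soon as $\findmin(S_v^{(v)})>s_v$. Once this invariant is verified, each $O(\log n)$ unit of work charges to a unique transitioning vertex and the amortization is tight.
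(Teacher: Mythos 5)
Your proposal is correct and follows essentially the same route as the paper: it amortizes over the at-most-one loss of s-maximality per vertex (your inline structural argument is exactly the paper's Lemma~\ref{lem:once}(ii) combined with Lemma~\ref{lem:s_v geq s_x}) and charges a constant number of $O(\log n)$ binomial-heap operations to each such transition. The double-counting worry you raise is resolved the same way the paper resolves it, since each extraction from $S_v^{(v)}$ corresponds to a distinct vertex losing s-maximality, which can happen only once over the entire recursion.
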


\begin{proof}
	See appendix~\ref{sect:s-max} along with the description of Update-s-maximal().
\end{proof}

\begin{theorem}\label{thm:alg}
	The sum-of-max chain partition problem can be solved in $O(n \log n)$ time with our algorithm.
\end{theorem}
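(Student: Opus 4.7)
The plan is to establish the theorem in two parts — correctness and running time — both leveraging the heap-over-heap invariants described in Section~\ref{sect:alg} together with Lemmas~\ref{lem:cost substract}, \ref{lem:findmin}, \ref{lemma:time_Pm}, and \ref{lemma:time_Ps}.

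For correctness, I would argue by structural induction on the tree. The base case (leaves) follows directly from the initialization of $H_{v}^{(v)}$ with key $s_v$ and from all other heaps being empty, so Lemma~\ref{lem:findmin} gives $F[v] = s_v$ as required. For the inductive step, assume that for every $c \in \Children(v)$ the recursive call returns heaps $H_{x}^{(c)}$, $\bH^{(c)}$, $S_{x}^{(c)}$, $W^{(c)}$ that satisfy the definitions with respect to $T_c$, and trace the main loop of Algorithm~\ref{alg:compute_F(v)}. First I would verify that after $\addall(W^{(v)},w_v)$ and Over-weight($c$), every vertex in $\win_c\setminus\win_v$ has been removed from $W^{(c)}$ and from the $H_{x}^{(c)}$'s, so that after the subsequent $\meld$ the surviving vertices are precisely those of $\win_v\cap T_c$. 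Next I would invoke Lemma~\ref{lem:cost substract}: for any $i$ whose nearest s-maximal ancestor is unchanged by attaching $v$, we have $\cost(v,i)-\cost(c,i)=\cost(v,c)-\cost(c,c)= a-F[c]=b$, so the $\addall(\bH^{(c)},b)$ in line~\ref{line:update_sum_m_s} correctly converts $\cost(c,i)$ into $\cost(v,i)$. Finally, Update-s-maximal($v$) repairs the remaining keys and heap membership for those $x\in \smax_c\setminus \smax_v$ (those with $s_x\leq s_v$), after which the heaps exactly match their definitions with respect to $T_v$, and Lemma~\ref{lem:findmin} yields $F[v]=\keyvalue(\bH^{(v)},\findmin(\bH^{(v)}))$.

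For the running time, I would split the work into three streams. The basic bookkeeping inside Algorithm~\ref{alg:compute_F(v)} — one $\ins$ into $S_{v}^{(v)}$, one $\addall$ on $W^{(v)}$, one $\meld$ into $W^{(v)}$, one $\addall$ on $\bH^{(c)}$, and one $\meld$ into $\bH^{(v)}$ per child, plus a constant number of $\ins$ and $\findmin$ calls at the end — contributes $O(\log n)$ per edge of $T$, for $O(n \log n)$ total over the entire recursion. The Over-weight() and Update-s-maximal() subroutines contribute $O(n \log n)$ each by Lemmas~\ref{lemma:time_Pm} and~\ref{lemma:time_Ps}. The non-heap overhead (summing $F[c]$'s, resetting emptied heaps, etc.) is $O(n)$ in total. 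Adding these gives the claimed $O(n\log n)$ bound.

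The main obstacle I anticipate is justifying that the two-stage key update — the $\addall$ on $\bH^{(c)}$ in Algorithm~\ref{alg:compute_F(v)} followed by the corrections inside Update-s-maximal($v$) — collectively transforms every surviving key from $\cost(c,i)$ into $\cost(v,i)$, without any double-adjustment or omission. This requires a careful case split depending on whether $\next_v(i)=\next_c(i)$ or $\next_v(i)=v$, and in the latter case depending on whether $i$ was itself s-maximal in $T_c$. Lemma~\ref{lem:cost substract} handles the first case by equation~(\ref{eqn:cost_substract}); the second case has to be absorbed into Update-s-maximal($v$), where the heap $H_{x}$ for each newly non-s-maximal $x$ is melded into $H_{v}^{(v)}$ with an additive shift of $s_v - s_x$ applied to its keys, and the entries in $\bH^{(v)}$ are updated accordingly. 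Showing that every $i \in \win_v$ experiences exactly the right net additive shift in its stored key is the delicate bookkeeping at the heart of both the correctness and the amortized running-time analysis.
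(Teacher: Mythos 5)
Your proposal is correct and follows essentially the same route as the paper: the running-time analysis matches the paper's proof of Theorem~\ref{thm:alg} almost verbatim (charging $O(\log n)$ per child for the bookkeeping in Algorithm~\ref{alg:compute_F(v)} and invoking Lemmas~\ref{lemma:time_Pm} and~\ref{lemma:time_Ps} for the two subroutines), while your correctness sketch by induction on the heap invariants is what the paper defers to Appendix~\ref{sect:correct} (Propositions~\ref{prop:cor-W}--\ref{prop:cor-bH}). One small caution: your intermediate identity $\cost(v,i)-\cost(c,i)=\cost(v,c)-\cost(c,c)$ only holds when $c$ itself stays s-maximal in $T_v$; the correct general justification that the shift equals $b=a-F[c]$ is that $\chainsum(v,i)-\chainsum(c,i)=a-F[c]$ for every surviving $i$, with the $s$-part handled separately in Update-s-maximal($v$), exactly as you note in your final paragraph.
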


\begin{proof}
	By Lemma~\ref{lemma:time_Pm} and Lemma~\ref{lemma:time_Ps}, the functions Over-weight() and Update-s-maximal() require $O(n \log n)$ time in total. 
	
	In the main process of each $v$,  we insert $v$ into $H_{v}^{(v)}$ with \ins() in $O(\log n)$ time. Note that the acquisitions of $a=\sum_{c\in \Children(v) }F[c]$ in total take time
	$$O((\sum_{v \in T}|\Children(v)|)\log n)=O(n \log n).$$
	
	Befrore Update-s-maximal($v$), for each $c\in\Children(v)$, we insert $c$ into $S_{v}^{(v)}$ with \ins() in $O(\log n)$ time. We then update $W^{(v)}$ in $O(\log n)$ time by \addall(). After Over-weight($c$), we also update and meld $\bH^{(c)}$ into $\bH^{(v)}$ with \addall() and \meld() in $O(\log n)$ time. Here, we conduct $O(1)$ times of $O(\log n)$ operations (excluding Over-weight($c$)) for every $c\in \Children(v)$ as described above. These operations in total also take time $$O((\sum_{v \in T}|\Children(v)|)\log n)=O(n \log n).$$
	
	After Update-s-maximal($v$), we insert $v$ into $W^{(v)}$ and insert the top vertex of the new $H_{v}^{(v)}$ into $\bH^{(v)}$ with \ins() in $O(\log n)$ time. The extraction of $F[v]$ is also $O(\log n)$ with \findmin() and \keyvalue(). Here, we conduct $O(1)$ times of $O(\log n)$ operations (excluding Update-s-maximal($v$)) for $v$, which is also $O(n \log n)$ in total.
	
	Therefore, during the whole recursive process, the algorithm takes $O(n \log n)$ time.\qed
\end{proof}

\begin{remark}\label{rem:next}
The maintenance and acquisition of $\next_*()$ are $O(n\alpha(n))< O(n\log n)$ in total. Therefore, function $\next_*()$ does not affect the $O(n \log n)$ time bound.

	The acquisition of $\next_*()$ can be done in $O(\alpha(n))< O(\log n)$ time with the find() operation of disjoint-set data structures. To achieve that, all we need to do is to set every vertex $\next_c(c)$ of $c \in \Children(v)$ as $v$ when $F[v]$ starts to be computed and merge the sets of vertex $x\in T_v$ and $v$ when $x$ loses its s-maximal identity in Update-s-maximal($v$). 
	
	As a result, the find() operation can be seen as $\next_c()$ before Update-s-maximal($v$) and $\next_v()$ after Update-s-maximal($v$) by the definition of $\next$. Thus we have $\next_c()$ and $\next_v()$ ready for every $T_c$ of $c\in \Children(v)$ in the main process of computing $F[v]$.
	
\end{remark}

\section{A generalized chain partition problem}\label{sect:generalize}
This section discusses a generalization of our algorithm.

Suppose every vertex $a$ of the tree is associated with a distinct rank $r_a$.
The \emph{dominant vertex} of a chain refers to the one in the chain with the largest rank.
	
	\begin{quote}\label{problem 2}
		\textbf{A chain partition problem.} 
		Given a rooted tree $T$ of $n$ nodes, where vertex $i$ is associated with a weight $w_i$, a cost $s_i$, and a rank $r_i$. Partition the tree into several disjoint chains $C_1,\ldots,C_k$, where $k$ is arbitrary, so that the total weight of each chain is no more than a given threshold $w_0$ and the sum of the costs of the chains is minimized, where the cost of a chain is given by $s_i$ -- where $i$ denotes the dominant vertex of the chain. 
	\end{quote}

This problem contains the previous problem as a special case; by setting $r_i=s_i$ we obtain the original sum-of-max problem.

\medskip To tackle this problem, we replace the concept s-maximal with r-maximal, and this also leads to a change of the definition of $\next_v()$ and $\smax$:
	
	\begin{definition}\label{def:gen}
		A vertex $u$ in $T_v$ is \emph{r-maximal} if each vertex $p$ in $T[v,u)$ admits that $r_p<r_u$. 

		For $u\in T_v$ and $u\neq v$, let $\next_v(u)$ be the closest r-maximal proper ancestor of $u$ in $T_v$.

		Let $\smax_v$ denote the r-maximal vertices in $\win_v$ and $\nmax_v$ denote the vertices in $\win_v\setminus\smax_v$.
	\end{definition}	 

	We now slightly modify our algorithm to solve the generalized problem:	

	One thing we have to change lies in the definition of heaps:
      $S_{x}^{(v)}$'s keys should be the $r_i$'s of vertices instead of $s_i$'s. 
	Another change we make lies in the function Update-s-maximal() in subsection~\ref{sect:s-max}. 
       All we need to do is to change the criteria from s-maximal to r-maximal (by changing the condition $\leq s_v$ to $\leq r_v$).
	
	The definitions above still fit Lemma~\ref{lem:findmin}, equation~(\ref{eq:F_pre}) and (\ref{eq:cost}) and all relevant analysis in this paper. 
     The heaps can be maintained and implemented to compute $F[v]$'s using the algorithm described above. As a summary, we conclude that
	
	\begin{theorem}\label{thm:generalization}
		The chain partition problem given above can be solved in $O(n \log n)$ time.
	\end{theorem}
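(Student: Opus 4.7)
The plan is to argue that Theorem~\ref{thm:generalization} follows by a direct adaptation of the algorithm developed in Section~\ref{sect:alg}, with only the changes indicated right before the theorem statement. So I would first verify that, under Definition~\ref{def:gen}, the structural identities on which the whole framework rests continue to hold; then observe that the data-structural bookkeeping and its running-time analysis are insensitive to whether ``s-maximal'' means $s_p<s_u$ or $r_p<r_u$.

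Concretely, I would first revisit equations~(\ref{eq:F_pre}) and~(\ref{eq:cost}). Their only dependence on $s$ is through the fact that the cost of the chain $T[v,i]$ is $s_y$, where $y$ is the unique vertex of $T[v,i]$ whose rank exceeds those of all its proper ancestors in $T[v,i]$, i.e.\ the dominant vertex of the chain. Under Definition~\ref{def:gen}, $y$ is exactly $i$ when $i\in\smax_v$ and $y=\next_v(i)$ when $i\in\nmax_v$, so equation~(\ref{eq:cost}) reads verbatim the same (with $s_{\next_v(i)}$ now denoting the cost associated with the r-maximal ancestor). Then I would replay the proof of Lemma~\ref{lem:cost substract}: the only property used there is that whenever $x=\next_v(i)$, the rank of $x$ dominates every vertex on $T[v,i]$, hence $x$ is also the dominant vertex of $T[x,i]$; this still holds because ``maximality'' is now measured by $r$ rather than $s$, and the cost attached to a chain is the $s$-value at its unique r-maximal endpoint. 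The chain of identities~(\ref{eq:cost-cost}) and~(\ref{eq:cost-cost2}) then goes through unchanged, giving equation~(\ref{eqn:cost_substract}). This is the main conceptual checkpoint, and while it is routine, it is the place to be careful: it is important that the keys stored in $H_x^{(v)}$ remain $\cost(x,i)$ (values of $s$), not $r$, while the keys in $S_x^{(v)}$ switch from $s_i$ to $r_i$. The former is what makes Lemma~\ref{lem:findmin} reduce $F[v]$ to a top-of-heap query; the latter is what makes the procedure Update-s-maximal($v$) correctly detect vertices that lose r-maximality when $v$ is absorbed as the new root.

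Once the analogue of Lemma~\ref{lem:cost substract} is established, Lemma~\ref{lem:findmin} carries over verbatim, since its proof only invokes equation~(\ref{eqn:cost_substract}) and the heap partition structure. Next I would observe that the algorithm \textsc{compute\_$F(v)$} and the two auxiliary procedures Over-weight() and Update-s-maximal() are syntactically unchanged aside from the single threshold comparison ($\leq r_v$ in place of $\leq s_v$) inside Update-s-maximal(). In particular the amortized arguments behind Lemma~\ref{lemma:time_Pm} and Lemma~\ref{lemma:time_Ps} depend only on the combinatorics of the window and of the r-maximal/s-maximal partition (each vertex is deleted from each heap at most once, each heap merge is charged to a vertex losing its maximality identity, etc.), and these counts are invariant under the switch from $s$ to $r$. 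The disjoint-set trick for $\next_v()$ in Remark~\ref{rem:next} likewise is agnostic to the ordering used.

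Putting these pieces together, every step of the proof of Theorem~\ref{thm:alg} transfers, so the overall running time remains $O(n\log n)$ and $F[r]$ is computed correctly. The main obstacle, and the only place that truly requires attention, is the semantic check in the preceding paragraph: that replacing ``largest $s_i$ on the chain'' by ``$s_i$ at the largest-rank vertex of the chain'' still makes equation~(\ref{eq:cost}) hold when ``maximal'' is measured by rank. Everything after that is a matter of quoting the earlier development.
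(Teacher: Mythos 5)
Your proposal is correct and follows essentially the same route as the paper: the paper likewise argues that after swapping s-maximal for r-maximal (with $S_x^{(v)}$ keyed by $r_i$ and the comparison in Update-s-maximal changed to $\leq r_v$), equations~(\ref{eq:F_pre}) and~(\ref{eq:cost}), Lemma~\ref{lem:cost substract}, Lemma~\ref{lem:findmin}, and the entire running-time analysis carry over unchanged. Your write-up is in fact somewhat more explicit than the paper's about the one genuinely load-bearing check, namely that the dominant vertex of $T[v,i]$ is $i$ or $\next_v(i)$ under the rank-based definition so that the cost identities still hold.
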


	
\section{Conclusions}
In this paper, we present an $O(n \log n)$ algorithm for the sum-of-max chain partition of a tree under a knapsack constraint. 
The algorithm is based on non-trivial observations (such as \Cref{lem:cost substract,lem:findmin}).
The key ingredient of our algorithm is a delicately designed data structure that consists of two layers of heaps, 
where the elements of heaps in the second layer $\bH^{(v)}$ come from the minimum elements of the heaps in the first layer $H^{(v)}$'s. 
We also generalize the algorithm to handle a more complex scenario (\Cref{sect:generalize}), and the time complexity remains $O(n \log n)$.

\smallskip 
One question worth exploring in the future is how to find the sum-of-max path partition of a tree.
Moreover, can we find the optimal chain partition of a tree when the cost of a chain $C$ is defined as $\max_{i\in C} s_i - \min_{i\in C} s_i$?


%
%
%
\bibliography{ref}

\begin{thebibliography}{10}
\providecommand{\url}[1]{\texttt{#1}}
\providecommand{\urlprefix}{URL }
\providecommand{\doi}[1]{https://doi.org/#1}

\bibitem{alstrup1997finding}
Alstrup, S., Lauridsen, P., Sommerlund, P., Thorup, M.: Finding cores of
  limited length. In: Algorithms and Data Structures: 5th International
  Workshop, WADS'97 Halifax, Nova Scotia, Canada August 6--8, 1997 Proceedings
  5. pp. 45--54. Springer (1997)

\bibitem{bagheri2024linear}
Bagheri, A., Keshavarz-Kohjerdi, F., Razzazi, M.: A linear-time approximation
  algorithm for the minimum-length geometric embedding of trees. Optimization
  pp. 1--16 (2024)

\bibitem{bagheri2010minimum}
Bagheri, A., Razzazi, M.: Minimum height path partitioning of trees. Scientia
  Iranica  \textbf{17}(2) (2010)

\bibitem{bender2004level}
Bender, M., Farach-Colton, M.: The level ancestor problem simplified.
  Theoretical Computer Science  \textbf{321}(1),  5--12 (2004)

\bibitem{boesch1974covering}
Boesch, F., Chen, S., McHugh, J.: On covering the points of a graph with point
  disjoint paths. In: Graphs and Combinatorics: Proceedings of the Capital
  Conference on Graph Theory and Combinatorics at the George Washington
  University June 18--22, 1973. pp. 201--212. Springer (1974)

\bibitem{brause2017relation}
Brause, C., Krivo{\v{s}}-Bellu{\v{s}}, R.: On a relation between $k$-path
  partition and $k$-path vertex cover. Discrete Applied Mathematics
  \textbf{223},  28--38 (2017)

\bibitem{buchsbaum2000maintaining}
Buchsbaum, A., Westbrook, J.: Maintaining hierarchical graph views. In:
  Proceedings of the eleventh annual ACM-SIAM symposium on Discrete algorithms.
  pp. 566--575 (2000)

\bibitem{cai2003}
Cai, Y., Zhang, X., Qian, J., Sun, Y.: An $O(n)$ Algorithm for $\omega$-Path
  Partition of Edge-Weighted Forests. Ph.D. thesis (2003)

\bibitem{chen2019improved}
Chen, Y., Goebel, R., Lin, G., Su, B., Xu, Y., Zhang, A.: An improved
  approximation algorithm for the minimum 3-path partition problem. Journal of
  Combinatorial Optimization  \textbf{38},  150--164 (2019)

\bibitem{goodman1974hamiltonian}
Goodman, S., Hedetniemi, S.: On the hamiltonian completion problem. In: Graphs
  and Combinatorics: Proceedings of the Capital Conference on Graph Theory and
  Combinatorics at the George Washington University June 18--22, 1973. pp.
  262--272. Springer (1974)

\bibitem{harel1984fast}
Harel, D., Tarjan, R.: Fast algorithms for finding nearest common ancestors.
  siam Journal on Computing  \textbf{13}(2),  338--355 (1984)

\bibitem{hartmanis1982computers}
Hartmanis, J.: Computers and intractability: a guide to the theory of
  np-completeness (michael r. garey and david s. johnson). Siam Review
  \textbf{24}(1), ~90 (1982)

\bibitem{jin2023sum}
Jin, K., Zhang, D., Zhang, C.: Sum-of-max partition under a knapsack
  constraint. Computers and Electrical Engineering  \textbf{105},  108521
  (2023)

\bibitem{jin2008vertex}
Jin, Z., Li, X.: Vertex partitions of $r$-edge-colored graphs. Applied
  Mathematics-A Journal of Chinese Universities  \textbf{23}(1),  120--126
  (2008)

\bibitem{klein1998computing}
Klein, P.: Computing the edit-distance between unrooted ordered trees. In:
  European Symposium on Algorithms. pp. 91--102. Springer (1998)

\bibitem{li2024improved}
Li, S., Yu, W., Liu, Z.: Improved approximation algorithms for the $k$-path
  partition problem. Journal of Global Optimization pp. 1--24 (2024)

\bibitem{lu2013path}
Lu, C., Zhou, Q.: Path covering number and $l(2,1)$-labeling number of graphs.
  Discrete Applied Mathematics  \textbf{161}(13-14),  2062--2074 (2013)

\bibitem{manuel2018revisiting}
Manuel, P.: Revisiting path-type covering and partitioning problems. arXiv
  preprint arXiv:1807.10613  (2018)

\bibitem{mehta2004handbook}
Mehta, D., Sahni, S.: Handbook of data structures and applications. Chapman and
  Hall/CRC (2004)

\bibitem{misra1975optimal}
Misra, J., Tarjan, R.: Optimal chain partitions of trees. Information
  Processing Letters  \textbf{4}(1),  24--26 (1975)

\bibitem{monnot2007path}
Monnot, J., Toulouse, S.: The path partition problem and related problems in
  bipartite graphs. Operations Research Letters  \textbf{35}(5),  677--684
  (2007)

\bibitem{pinar2004fast}
P{\i}nar, A., Aykanat, C.: Fast optimal load balancing algorithms for 1d
  partitioning. Journal of Parallel and Distributed Computing  \textbf{64}(8),
  974--996 (2004)

\bibitem{slater1979path}
Slater, P.: Path coverings of the vertices of a tree. Discrete Mathematics
  \textbf{25}(1),  65--74 (1979)

\bibitem{sleator1981data}
Sleator, D., Tarjan, R.: A data structure for dynamic trees. In: Proceedings of
  the thirteenth annual ACM symposium on Theory of computing. pp. 114--122
  (1981)

\bibitem{Vuillemin1978A}
Vuillemin, J.: A data structure for manipulating priority queues.
  Communications of the Acm  \textbf{21}(4),  309--315 (1978)

\bibitem{yan1997k}
Yan, J., Chang, G., Hedetniemi, S., Hedetniemi, S.: $k$-path partitions in
  trees. Discrete Applied Mathematics  \textbf{78}(1-3),  227--233 (1997)

\end{thebibliography}
\bibliographystyle{splncs04}

\clearpage
\appendix
\section{Binomial Heap with Lazy Tags}\label{app:lazy_tag}

For each node in the binomial heap, we add a lazy tag to it, representing adding or subtracting a value uniformly for all elements below. In this section, we briefly explain the differences brought by the lazy tag technique.

Traditionally, when using lazy tags, we pushdown a tag to all children when the node is accessed. However, this would lead to a non-$O(1)$ pushdown operation in binomial heaps since there can be $O(\log n)$ children for one node, which may cause an $O(\log^2 n)$ factor in time bound. 

Therefore, we do not pushdown a lazy tag unless it's a tag of a root vertex to be deleted. This change requires us to handle the tags in a slightly different way. First, we only exert a negative pushdown to the newly melded tree as shown in Figure~\ref{fig:tag-meld}. Suppose tree $B_a$ with a bigger root is linked to $B_b$, then the opposite number of the lazy tag of $B_b$ will is added to the lazy tag of $B_a$ during the linking operation. This trick allows us we to preserve the property of the heap while the negative pushdown is still an $O(1)$ operation.

\begin{figure}[htbp]\label{fig:tag-meld}
	\centering
	\includegraphics[width=0.75\textwidth]{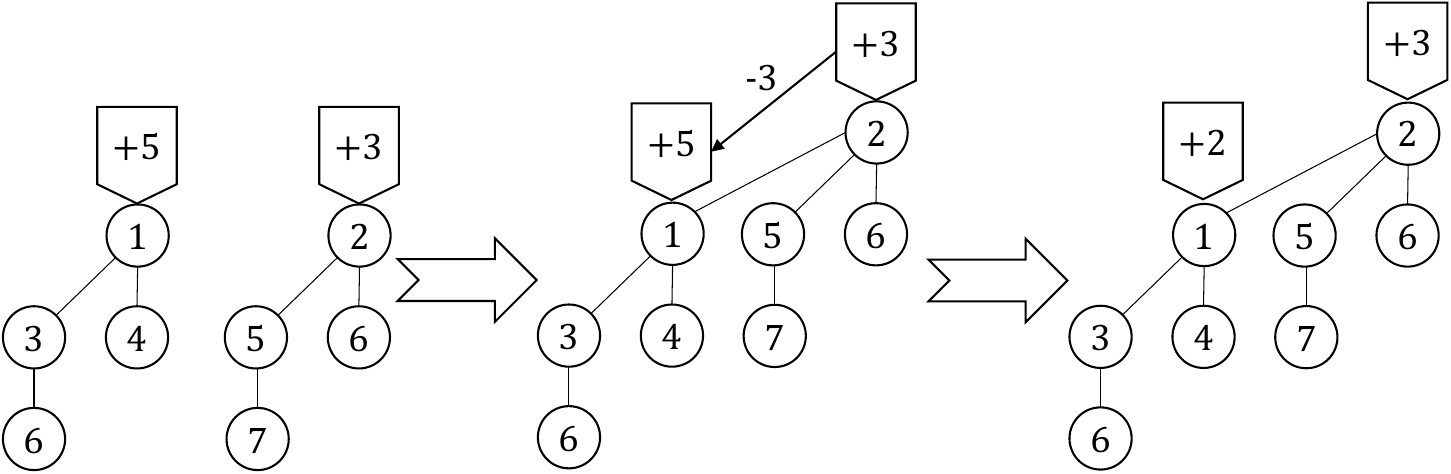}
	\caption{Example of binomial heaps melding with lazy tag}\label{fig:meld}
\end{figure}

Besides, when pushing up a node for a key value change. The key value of the node swapped down needs to be changed since the tag stays still. If a vertex is swapped down, it needs to counter the effect caused by the lower tag and thus add the opposite of the lower tag, see Figure~\ref{fig:pushup} for illustration.

\begin{figure}[htbp]
	\centering
	\includegraphics[width=0.8\textwidth]{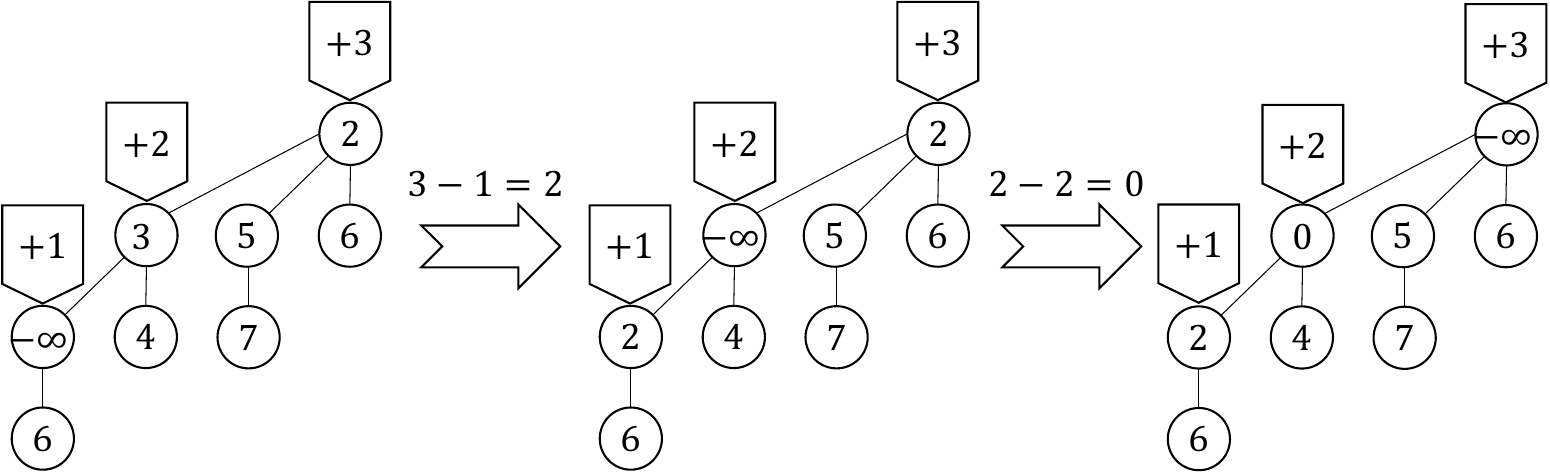}
	\caption{Example of binomial heaps pushing up a node $-\infty$. In the first step, 3 is swapped down and becomes $3-1=2$ to counter the $+1$ tag's effect since it was not affected by $+1$ in the first place. Similarly, 2 is swapped down and becomes $2-2=0$ to counter the $+2$ tag.}\label{fig:pushup}
\end{figure}

Thanks to the $O(\log n)$ depth limit of binomial heap, even with lazy tags affecting the real key value of each node, we can still also acquire a node's key-value in $O(\log n)$ time by simple tree-climbing towards the root.

\section{Details of the functions}\label{app:func}

Before we analyze the functions, we first present an lemma. The lemma and its corollaries strictly limits the number of calls of the functions in this section, contributing a key ingredient of our analysis.

\begin{lemma}\label{lem:once}
	Given tree $T$, for a pair of vertex $u, v\in T$, if $u$ is an ancestor of $v$ we have
	\begin{itemize}
		\item[(i)] if vertex $i$ is a descendant of $v$, $i\notin \win_v$, then $i\notin \win_u$.
		\item[(ii)] if vertex $i$ is a descendant of $v$, $i\notin \smax_v$, then $i\notin \smax_u$.
	\end{itemize}
	As a corollary of (i), during our recursive process from the bottom up, if a vertex $i$ leaves the window of $T_v$, then we shall never have it in our window again when processing $v$'s ancestors in further procedures.
	
	And as a corollary of (ii), during our recursive process from the bottom up, if a vertex $i$ lost its s-maximal identity in $T_v$, then we shall never have it s-maximal again when processing $v$'s ancestors in further procedures.
\end{lemma}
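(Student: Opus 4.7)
The plan is to prove both parts by exploiting the inclusion $T[v,i] \subseteq T[u,i]$ (and hence $T[v,i) \subseteq T[u,i)$) that holds whenever $u$ is an ancestor of $v$ and $i$ is a descendant of $v$. The chain $T[u,i]$ is obtained from $T[v,i]$ by prepending the vertices of $T[u,v)$, so every condition that fails along $T[v,i]$ (either a weight overflow or the presence of a blocking vertex with higher $s$-value) will persist along $T[u,i]$. I expect the whole argument to be short; the only subtle point is to make sure the case split in (ii) is handled correctly.

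For part (i), I would observe that since weights are non-negative (by the knapsack setup, where the constraint $w_i \le w_0$ and the problem statement implicitly treat $w_i \ge 0$), the chain-weight function is monotone under extension. Concretely,
\[
\sum_{j \in T[u,i]} w_j \;=\; \sum_{j \in T[u,v)} w_j \;+\; \sum_{j \in T[v,i]} w_j \;\geq\; \sum_{j \in T[v,i]} w_j \;>\; w_0,
\]
where the strict inequality is by hypothesis $i \notin \win_v$. Hence $i \notin \win_u$.

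For part (ii), I would split on why $i \notin \smax_v$. By Definition~\ref{def:preliminary}, $\smax_v$ consists of s-maximal vertices that additionally lie in $\win_v$, so either (a) $i \notin \win_v$, or (b) $i \in \win_v$ but $i$ is not s-maximal in $T_v$, meaning some $p \in T[v,i)$ has $s_p \ge s_i$. In case (a), part (i) immediately gives $i \notin \win_u \supseteq \smax_u$. In case (b), the same blocker $p$ lies in $T[v,i) \subseteq T[u,i)$ and still satisfies $s_p \ge s_i$, so $i$ is not s-maximal in $T_u$ either, hence $i \notin \smax_u$.

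The only thing that might look like an obstacle is the role of the window in the definition of $\smax_v$: one might worry that $i$ could be non-s-maximal in $T_v$ but somehow become s-maximal in $T_u$ after the root moves up. The case analysis above rules this out cleanly, because s-maximality is determined entirely by the ancestors of $i$ within the subtree, and moving the root upward only enlarges the set of ancestors tested. So neither part requires anything beyond the set inclusions $T[v,i] \subseteq T[u,i]$ and $T[v,i) \subseteq T[u,i)$, together with non-negativity of weights for (i). The two corollaries stated after the lemma then follow by applying (i) and (ii) along the ancestor chain during the bottom-up recursion.
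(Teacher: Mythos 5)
Your proof is correct and takes essentially the same approach as the paper: part (i) via monotonicity of the chain weight under the inclusion $T[v,i]\subseteq T[u,i]$, and part (ii) via persistence of a blocking vertex $p\in T[v,i)\subseteq T[u,i)$, with the window case dispatched by part (i). Your explicit case split in (ii) (and your use of $\geq$ rather than the paper's strict inequality in the weight chain) is if anything slightly more careful than the paper's version, but the argument is the same.
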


\begin{proof}[of (i)]
	Suppose $i$ is a descendant of $v$, $i\notin \win_v$. By definition, we have $\sum_{j\in T[v,i]}w_j> w_0$. Since $u$ is the ancestor of $v$, we have $$\sum_{j\in T[u,i]}w_j = \sum_{j\in T[u,v)}w_j + \sum_{j\in T[v,i]}w_j> \sum_{j\in T[v,i]}w_j> w_0.$$
	The inequality above indicates that $i\notin \win_u$. \qed
\end{proof}

\begin{proof}[of (ii)]
	Suppose $i$ is a descendant of $v$, $i\notin \smax_v$. We may assume that $i\in \win_u$, otherwise $i\notin \smax_u$ trivially.
	
	By definition, we know that there exists a vertex $p\in T[v,i)$ admitting $s_i<s_p$. Since $T[v,i)\subset T[u,i)$, we also have $p\in T[u,i)$. Thus $i$ is also not s-maximal in $T_u$, and therefore $i\notin \smax_u$.\qed
\end{proof}

\subsection{Critical function: Over-weight()}\label{sect:OB}

In Over-weight($c$), we process vertices $x\in \win_c\setminus \win_v$ (exiting the window), as shown in Figure~\ref{fig:alg_1}. 

Since we temporarily consider $c$ to be s-maximal here as the root of $T_c$, $\next_c()$ and $\next_v()$ are essentially the same. We only use $\next_c()$ here and only change it to $\next_v()$ permanently when executing Update-s-maximal($v$) later.

At the beginning of Over-weight($c$), we search for vertices in $W^{(v)}$ with keys larger than $w_0$ in $O(\log n)$ time each with $\findmin()$. These vertices are the ones leaving the wind by definition with $\sum_{j\in T[v,*]}w_j > w_0$. During the function, we delete them from the heaps from the bottom up. Supposing a vertex $x \in T_c$ is such vertex, we need to:

\textbf{(i)} Delete $x$ from $H_{\next_c(x)}^{(v)}$ and $S_{\next_c(x)}^{(v)}$. 

\textbf{(ii)} Replace $x$ in $\bH^{(c)}$ with the next smallest vertex in $H_{\next_c(x)}^{(v)}$ (if exists), since $x$ is not in $\win_v$ anymore. 

\textbf{(iii)} Delete $x$ from $W^{(v)}$ since it's not in $\win_v$ and thus not needed to be melded into $W^{(v)}$. 

In (i), we complete the deletion with $\deletevertex()$. 

In (ii), we also need to replace $x$ with new top element $k$ of $H_{\next_c(x)}^{(v)}$. By Lemma~\ref{lem:cost substract}, we compute $\cost(c,k)$ as follows.
\begin{equation}
	\cost(c,k)=\cost(\next_c(x),k) + \cost(c,x)-\cost(\next_c(x),x).
\end{equation}
Since $\cost(c,x)$ is in $\bH^{(c)}$, $\cost(\next_c(x),x)$ and $\cost(\next_c(x),k)$ are in  $H_{\next_c(x)}^{(v)}$, we can obtain these values with $\keyvalue()$.

In (iii), we simply delete $x$ from $W^{(v)}$.

After Over-weight($c$) for every $c\in\Children(v)$, vertices in $\win_v$ are the only vertices kept in the heaps. See Algorithm~\ref{alg:weight} in Appendix~\ref{app:alg} for pseudocode of Over-weight($c$).

\begin{proof}[of Lemma~\ref{lemma:time_Pm}]
	For every vertex $x$ in $T_c$ leaving $\win_v$, it takes $O(1)$ time for us to search it from binomial heap $W^{(v)}$ with \findmin(). Then it cost us $O(\log n)$ time to delete it from $H_{\next_c(x)}^{(v)}$ and potential $\bH^{(c)}$. It costs us $O(\log n)$ time to find $k$ and insert it into $\bH^{(c)}$ as a replacement with \findmin() and \ins(). We also need $O(\log n)$ time to delete it from its previous $S_{\next_c(x)}^{(v)}$ with \deletevertex().
	
	By Lemma~\ref{lem:once}, every vertex only exit the window once with $O(1)$ number of $O(\log n)$ time operations. As a result, Over-weight() takes $O(n\log n)$ time in total.\qed
\end{proof}

\subsection{Critical function: Update-s-maximal()}\label{sect:s-max}

After Over-weight($c$) for every $c\in\Children(v)$, all vertices $\win_v$ are kept in the heaps. But still, for $c\in \Children(v)$ and its $T_c$, some vertices in $\smax_c$ may not be in $\smax_v$ for not being s-maximal anymore, we mainly deal with these vertices in Update-s-maximal().

For vertex $c\in \Children(v)$, if a vertex $x$ is in $\smax_c$ but not in $\smax_v$, it means its s-maximal identity is lost in $T_v$. In fact, we have the following lemma stating that $x\in\smax_c$'s s-maximal identity can only be lost to $v$ in $T_v$ and $v$ will inherit vertices $i\in \win_v$ with $\next_c(i)=x$.

\begin{lemma}\label{lem:s_v geq s_x}
	If $x\in\smax_c\setminus\smax_v$ for a $c\in \Children(v)$, then we have:
	(i)$s_v\geq s_x$ and (ii)$\next_v(i)=v$ for all $i\in \win_v$ with $\next_c(i)=x$.
\end{lemma}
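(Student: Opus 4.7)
The plan is to derive both parts from the simple observation that the chain of proper ancestors of any descendant of $c$ in $T_v$ is exactly the corresponding chain in $T_c$ with $v$ prepended, so the only way s-maximality status can change when passing from $T_c$ to $T_v$ is due to $v$ itself.

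For part (i), I would use exactly this observation on $x$. The hypothesis $x \in \smax_c$ gives $s_p < s_x$ for every $p \in T[c,x)$. The hypothesis $x \notin \smax_v$, together with $x \in \win_v$ (which is implicit, since otherwise $x$ would have been removed during Over-weight and would not be considered here), forces some $p \in T[v,x)$ with $s_p \geq s_x$. Since $T[v,x) = \{v\} \cup T[c,x)$, the offending $p$ cannot live in $T[c,x)$, so it must be $v$, giving $s_v \geq s_x$.

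For part (ii), I would fix any $i \in \win_v$ with $\next_c(i) = x$ and show that $v$ is the closest proper s-maximal ancestor of $i$ in $T_v$. Since $v$ is trivially s-maximal in $T_v$, it suffices to verify that no $p$ in $T[c,i)$ is s-maximal in $T_v$. I would split $T[c,i)$ into two pieces and handle them separately:
\begin{itemize}
\item For $p \in T[c,x]$: s-maximality of $x$ in $T_c$ yields $s_p \leq s_x$, and part (i) gives $s_x \leq s_v$; since $v \in T[v,p)$, the witness $v$ shows $p \notin \smax_v$.
\item For $p \in T(x,i)$: by the definition of $\next_c(i) = x$, no vertex strictly between $x$ and $i$ is s-maximal in $T_c$, so $p \notin \smax_c$; Lemma~\ref{lem:once}(ii) then transfers this to $T_v$, giving $p \notin \smax_v$.
\end{itemize}

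The main obstacle I expect is the boundary case $p = x$ in the first bullet, where one only has $s_p = s_x$ rather than the strict inequality available elsewhere in that range; here the weak inequality $s_v \geq s_x$ from part (i) is exactly what is needed to rule out s-maximality of $p$ in $T_v$. Everything else is essentially a matter of unpacking definitions and citing Lemma~\ref{lem:once}(ii) to avoid reproving that non-s-maximality is inherited upward in the tree.
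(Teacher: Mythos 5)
Your proof is correct and follows essentially the same route as the paper's: part (i) is identical (the witness forcing $s_q\geq s_x$ must be $v$ since $x\in\smax_c$ rules out everything in $T[c,x)$), and part (ii) likewise reduces to showing no vertex of $T[c,i)$ stays s-maximal in $T_v$. The only difference is cosmetic: for $p\in T(x,i)$ you invoke Lemma~\ref{lem:once}(ii) to transfer non-s-maximality from $T_c$ to $T_v$, whereas the paper argues via the inequality chain $s_v\geq s_x\geq s_p$; your split also handles the $p=x$ boundary and the possibility that $i$ itself is s-maximal slightly more carefully than the paper's wording, but the substance is the same.
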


\begin{proof}
	By the definition of $\smax_c$, we know that $s_x>s_p$ for all $p\in T[c,x]$. Since $x\notin\smax_v$, $\exists q\in T[v,x)$ where $s_q\geq s_x$. As a result,  we have (i):$s_v\geq s_x$. 
	
	As for (ii), since $\next_c(i)=x$, we have $s_x\geq s_p$ for all $p\in T[c,i]$. Therefore, we have $s_v\geq s_p$ for all $p\in T[v,i]$ and thus only $v$ is s-maximal in $T[v,i]$ for $T_v$, $\next_v(i)=v$.\qed
\end{proof}

As a result, we can search for such $x$ with heap $S_{v}^{(v)}$ and process them accordingly.

\medskip In the main process, firstly, all $c\in \Children(v)$ are stored in $S_{v}^{(v)}$. For a certain $x\in S_{v}^{(v)}$, if $s_x > s_v$, then no vertex in $T_x$ would be affected by $s_v$. Otherwise, $x$ loses its s-maximal identity and any vertex $i\in\win_v$ with $\next_c(i)=x$ would have $\next_v(i)=v$. Thus, by definition, $H_{x}^{(v)}$ and $S_{x}^{(v)}$ should be melded into $H_{v}^{(v)}$ and  $S_{v}^{(v)}$ respectively. This can be done in $O(\log n)$ time with \meld().

\medskip Recall the definition of $H_{v}^{(v)}$. This melding process needs updating the costs in $H_{x}^{(v)}$ from $\cost(x,*)$ to $\cost(v,*)$ first. The update is as follows.

For all vertices $i \in H_{x}^{(v)}$, by Lemma~\ref{lem:cost substract}, $g = \cost(v,i)-\cost(x,i)$ is a fixed number for fixed $x$ and $v$. Therefore, we only need to add $g$ to their keys to update its cost with \addall() in $O(\log n)$ time to complete the update. Now we discuss the computation of $g$.

Suppose $z \in H_{x}^{(v)}$ is the minimum element of $H_{x}^{(v)}$. Recall the main process, the key value of $z$ in $\bH^{(v)}$ should be $(\chainsum(v,z)+s_x)=(\cost(v,z)+s_x-s_v)$ now. This is because its $\chainsum(c,z)$ has been updated to $\chainsum(v,z)$ in the main process before melding $\bH^{(c)}$ into $\bH^{(v)}$. However, we have assumed $x$ to be s-maximal in that part, leaving $s_x$ of the key value not updated to $s_v$.

Having $(\cost(v,z)+s_x-s_v)$ in $\bH^{(v)}$ and $\cost(x,z)$ in $H_{x}^{(v)}$, we can acquire $(\cost(v,z)+s_x-s_v)$ and $\cost(x,z)$ in $O(\log n)$ time. After that, we can compute
\begin{equation}
	\begin{aligned}
		g &= (\cost(v,z)+s_x-s_v)-\cost(x,z) +s_v-s_x\\
		&= \cost(v,z)-\cost(x,z)\\
		&= \cost(v,i)-\cost(x,i).
	\end{aligned}
\end{equation}

So for all $i \in H_{x}^{(v)}$, we uniformly add $g = (\cost(v,i)-\cost(x,i))$ to them to update their cost from $\cost(x,i)$ to $\cost(v,i)$ with \addall(). Then, we meld $H_{x}^{(v)}$ with updated cost values into $H_{v}^{(v)}$. As for $x$, since it is no longer s-maximal, we delete it from $S_{v}^{(v)}$. Besides, when $x$ lost its s-maximal identity, its corresponding heaps' corresponding vertex in $\bH^{(v)}$ are also deleted without replacement, since the heap itself is melded already.

\medskip After melding $H_{x}^{(v)}$ into $H_{v}^{(v)}$ and $S_{x}^{(v)}$ into $S_{v}^{(v)}$ respectively. We continue to search for the next smallest element in $S_{v}^{(v)}$ and repeat the previous process until no element in $S_{v}^{(v)}$ has a key smaller than $s_v$.

\begin{remark}
	Note that $x$ may not be in $S_{v}^{(v)}$ in the first place, but if $x$ lost its s-maximal identity, then $\next_c(x)$ also loses its s-maximal identity since $s_v\geq s_x>s_{\next_c(x)}$. So $S_{v}^{(v)}$ first detects and processes $\next_c(x)$ before $x$ and detects $x$ after $S_{\next_c(x)}^{(v)}$ is melded into $S_{v}^{(v)}$. This process is transitive.
\end{remark}

\begin{remark}
	From here, the reason why we design two layers of heaps is obvious. This is because $H_{x}^{(v)}$'s serving as the first layer, are defined based on $\next_v(i) = x$. With the heaps accurately locating the vertices whose costs contain $s_x$, we can now update them uniformly with an $\addall()$ operation in $O(\log n)$ time when the $s_x$ is replaced by $s_v$.
\end{remark}

See Algorithm~\ref{alg:phase 2} in Appendix~\ref{app:alg} for pseudocode of Update-s-maximal($v$).

\begin{proof}[of Lemma~\ref{lemma:time_Ps}]
	In Update-s-maximal(), when a vertex $x$ is not s-maximal anymore, we meld its corresponding heaps $H_{x}^{(v)}$,  and $S_{x}^{(v)}$ into $H_{v}^{(v)}$ and $S_{v}^{(v)}$ respectively in $O(\log n)$ time with \meld(). For the melded $H_{x}^{(v)}$, we also delete its top element from $\bH^{(v)}$ respectively in $O(\log n)$ time with \deletevertex(). For $x$, since its no longer s-maximal, we also delete it from $S_{v}^{(v)}$ in $O(\log n)$ time with \deletevertex(). 
	
	It takes $O(n \log n)$ time to deal with such cases in total, because by Lemma~\ref{lem:once}, a vertex can only lost its s-maximal identity once during the recursive process with $O(1)$ number of $O(\log n)$ time operations above. \qed
\end{proof}

\section{Extra discussion}\label{sect:correct}
In this section, we discuss more details of the correctness of the algorithm. Namely the correctness of the way we maintain the heaps. We demonstrate this heap by heap with four propositions.

\begin{proposition}\label{prop:cor-W}
	When extracting $F[v]$, we have $W^{(v)}$ storing $i\in \win_v$ with key $\sum_{j\in T[v,i]}w_j$. All nodes leaving $\win_v$ are deleted from all heaps.
\end{proposition}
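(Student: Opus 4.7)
The plan is to prove this by structural induction on $v$ following the bottom-up recursion of compute\_$F$. The inductive hypothesis is that for every $c \in \Children(v)$, at the moment compute\_$F(c)$ returns, $W^{(c)}$ stores exactly $\win_c$ with key $\sum_{j \in T[c,i]} w_j$ for each $i \in \win_c$, and every vertex that ever belonged to the window of a proper descendant of $c$ but fell out of $\win_c$ has been purged from $W^{(c)}, H^{(c)}_*, S^{(c)}_*, \bH^{(c)}$. The base case is a leaf $v$: the for-loop over $\Children(v)$ is vacuous, and the final $\ins(W^{(v)},v,w_v)$ produces the singleton $\{v\}$ with key $w_v = \sum_{j \in T[v,v]} w_j$, which matches $\win_v = \{v\}$.

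For the inductive step I would track $W^{(v)}$ through successive iterations of the loop over $\Children(v)$, leaning on the elementary identity
\begin{equation}\label{eqn:prop-identity}
    \sum_{j \in T[v,i]} w_j \;=\; w_v + \sum_{j \in T[c,i]} w_j \qquad \text{for every } c \in \Children(v), \; i \in T_c.
\end{equation}
I would formulate the loop invariant: just before Over-weight($c$) is called, every vertex currently held in the combined weight heap carries the key $\sum_{j \in T[v,i]} w_j$. The \addall step, which adds $w_v$, together with \eqref{eqn:prop-identity} preserves this invariant when $W^{(c)}$ is melded in, because the inductive hypothesis supplies $W^{(c)}$ with keys $\sum_{j \in T[c,i]} w_j$. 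Now Over-weight($c$) repeatedly pulls the maximum-key entry; whenever the top key exceeds $w_0$, this vertex $i$ satisfies $\sum_{j \in T[v,i]} w_j > w_0$, i.e.\ $i \in \win_c \setminus \win_v$, and so must be removed from every heap. By Lemma~\ref{lem:once}(i) such a removal is permanent, which is exactly what the second half of the proposition requires. After all children are processed, $\ins(W^{(v)},v,w_v)$ appends $v$ with key $w_v$, yielding $W^{(v)} = \{v\} \cup \bigcup_{c \in \Children(v)} (\win_c \cap \win_v) = \win_v$ with the correct keys.

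The main obstacle is the bookkeeping around the order in which \addall, Over-weight and \meld occur in Algorithm~\ref{alg:compute_F(v)}: a single running heap accumulates contributions from several children, and its keys are shifted more than once. I would therefore write the loop invariant as a conjunction asserting (a) correctness of the current keys as $\sum_{j \in T[v,\cdot]} w_j$, (b) agreement of the currently stored set with $\bigcup_{c' \text{ processed so far}} (\win_{c'} \cap \win_v)$, and (c) the purged vertices do not reside in any of the other heaps either; and verify each step of the pseudocode preserves all three clauses. Clause (c) is where Lemma~\ref{lem:once}(i) carries the weight: once a descendant exits the window it never re-enters in any ancestor's window, so deletions made in an earlier iteration never need to be undone, and the co-deletions performed by Over-weight inside $H_x^{(v)}$, $S_x^{(v)}$ and $\bH^{(c)}$ (described in Section~\ref{sect:OB}) are sound. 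Stating this strengthened invariant explicitly is what makes the otherwise straightforward induction go through and also primes the setting for the remaining three correctness propositions.
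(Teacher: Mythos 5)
Your proposal is correct and follows essentially the same route as the paper's own (much terser) proof: both rest on the inclusion $\win_v \subseteq \bigcup_{c\in \Children(v)}\win_c\cup\{v\}$, the fact that Over-weight($c$) deletes exactly the vertices whose shifted keys exceed $w_0$, and the key-update identity $\sum_{j\in T[v,i]}w_j = w_v + \sum_{j\in T[c,i]}w_j$ applied via \textnormal{add-all} before melding. Your explicit induction with the three-clause loop invariant (and the appeal to Lemma~\ref{lem:once}(i) for permanence of deletions) is simply a more careful spelling-out of what the paper leaves implicit.
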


\begin{proof}
	 $\win_v \subseteq (\bigcup_{c\in \Children(v)}\win_c\cup\{v\})$ and we have deleted all $(\bigcup_{c\in \Children(v)}\win_c\cup\{v\})\setminus \win_v$ from all heaps in Over-weight($c$), so we have exactly the vertices in $\win_v$ stored in $W^{(v)}$ and all vertices leaving $\win_v$ deleted from all heaps. The correctness of the key values of $W^{(v)}$ follows from the update of $W^{(c)}$  before melding it into $W^{(v)}$.\qed
\end{proof}

\begin{proposition}\label{prop:cor-S}
	When extracting $F[v]$, we have $S_{x}^{(v)}$ storing $i$ with key $s_i$ for each $i \in \smax_v$ and $\next_v(i)=x$.
\end{proposition}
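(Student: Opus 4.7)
The plan is to prove the proposition by structural induction on $v$, processing the tree bottom-up. The base case is when $v$ is a leaf: $\Children(v)=\emptyset$, so $\win_v=\smax_v=\{v\}$, and since $\next_v$ is undefined on $v$, the required invariant is $S_v^{(v)}=\emptyset$. This matches the algorithm's state after the initialization $S_v^{(v)}\leftarrow\emptyset$ followed by an empty child loop.

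For the inductive step, I would assume the proposition holds at the end of compute\_$F(c)$ for every $c\in\Children(v)$, so each $S_x^{(c)}$ with $x\in\smax_c$ stores exactly $\{i\in\smax_c:\next_c(i)=x\}$ keyed by $s_i$. I would then trace through compute\_$F(v)$ in two preparatory stages. First, in the child loop, we insert each $c\in\Children(v)$ into $S_v^{(v)}$ with key $s_c$ (pretending, correctly from the viewpoint of $T_c$, that $\next_v(c)=v$), and then Over-weight($c$) eliminates every $i\in\win_c\setminus\win_v$ from $S_{\next_c(i)}^{(v)}$. By Lemma~\ref{lem:once}(i) these deleted vertices never re-enter any window higher in the recursion, so the deletion is safe; by Lemma~\ref{lem:once}(ii) the vertices remaining in some $S_x^{(v)}$ at the end of the child loop are precisely those in $\bigl(\bigcup_{c\in\Children(v)}\smax_c\bigr)\cap\win_v$, each still placed according to $\next_c$ rather than the (soon-to-be-final) $\next_v$.

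The critical stage is Update-s-maximal($v$), which must repair the $\next$ assignments in light of the newly added root $v$. The key observation I would highlight is a transitivity property: if $x\in\smax_c$ satisfies $s_x\le s_v$ (so $x$ loses s-maximality by Lemma~\ref{lem:s_v geq s_x}(i)), then $y:=\next_c(x)$ also satisfies $s_y\le s_v$, since the s-maximality of $x$ in $T_c$ forces $s_y<s_x$. The algorithm repeatedly takes the minimum of $S_v^{(v)}$ and, while its key is at most $s_v$, melds $S_x^{(v)}$ into $S_v^{(v)}$ and removes $x$. I would argue, by induction along the ancestor chain from a given $i\in\smax_v$ with $\next_v(i)=v$, that every intermediate s-maximality-losing ancestor of $i$ in $T_c$ is processed in ancestor-first order, so that $i$ eventually lands in $S_v^{(v)}$. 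Conversely, a vertex $i\in\smax_v$ with $\next_v(i)=x\ne v$ must have $x\in\smax_c$ and $\next_c(i)=x$ (by Lemma~\ref{lem:once}(ii) applied to the $c$-to-$i$ path), and since $s_x>s_v$ the vertex $x$ is never touched in Update-s-maximal, so $i$ remains in $S_x^{(v)}$. Lemma~\ref{lem:s_v geq s_x}(ii) certifies that the $\next_v$ value induced by the algorithm is correct throughout.

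The main obstacle is the bookkeeping in the third stage: verifying both inclusions simultaneously, namely that every $i\in\smax_v$ ends up in exactly $S_{\next_v(i)}^{(v)}$, and that no vertex with $\next_v(i)\ne x$ spuriously occupies $S_x^{(v)}$. Both directions ultimately reduce to the transitivity observation together with the invariance of $\next$ between $T_c$ and $T_v$ on ancestors that retain s-maximality. A secondary subtlety is the boundary case $i=c$ for $c\in\Children(v)$: such a vertex is inserted directly into $S_v^{(v)}$ at the start of the child loop rather than via an inductive $S_*^{(c)}$, and whether it survives in $S_v^{(v)}$ or is removed by Update-s-maximal depends only on the comparison $s_c$ vs.\ $s_v$, which matches the definition of $\smax_v$.
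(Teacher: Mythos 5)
Your proposal is correct and follows essentially the same route as the paper's proof: an inductive trace through the child loop, Over-weight, and Update-s-maximal, with the same case split (vertices whose $\next$ value is unchanged versus those inherited by $v$), resting on Lemma~\ref{lem:once} and Lemma~\ref{lem:s_v geq s_x}, and your ``transitivity'' cascade argument for the processing order in Update-s-maximal is exactly the observation the paper records in the remark following that function's description.
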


\begin{proof}
	Note that before the main process, $S_{x}^{(v)}$ stores $i$ with key $s_i$  if $i \in \smax_c$ and $\next_c(i)=x$ for every $c\in \Children(v)$, which are the nodes in $\bigcup_{c\in \Children(v)}\smax_c$ except all $c\in \Children(v)$. Since we have inserted $c\in \Children(v)$ into $S_v^{(v)}$ in the main process, if a vertex needs to be in an $S_{x}^{(v)}$ in $T_v$, it is in an $S_{x}^{(v)}$ now. To show that every vertex is in its corresponding $S_{x}^{(v)}$ properly, we discuss three cases:
	
	(i)Suppose $i \in \smax_v \cap T_c$ and $\next_v(i)=x=\next_c(i)$ for a $c\in \Children(v)$. Then we have $x \in \smax_c\cap\smax_v$. We also have $s_x>s_v$ since $x \in \smax_v$, so $S_{x}^{(v)}$ is not affected throughout the algorithm (Except trivially deleting the vertices exiting the window). Then we have all $i \in \smax_v$ and $\next_v(i)=\next_c(i)=x$ stored in $S_{x}^{(v)}$ when extracting $F[v]$. Such $i$'s are properly stored in their corresponding $S_{x}^{(v)}$.
	
	(ii)Suppose $i \in \smax_v \cap T_c$ and $\next_v(i)\neq\next_c(i)=x$ for a $c\in \Children(v)$. By Lemma~\ref{lem:once}, no vertex in $T(x,i]$ can become s-maximal in $T_v$ since they are not s-maximal in $T_c$ already. Then $x$ must have lost its s-maximal identity to $v$ in $T_v$ by Lemma~\ref{lem:s_v geq s_x}. In this case, $S_{x}^{(v)}$ is melded into $S_{v}^{(v)}$ in Update-Update-s-maximal($v$) and $x$ is deleted from $S_{v}^{(v)}$. Then we have all $i \in \smax_v$ and $\next_v(i)\neq\next_c(i)=x$ properly stored in $S_{v}^{(v)}$ when extracting $F[v]$, where $\next_v(i)=v$ by Lemma~\ref{lem:s_v geq s_x}.
	
	(iii)Suppose $i \notin \smax_v$, then it is either not in an $S_{x}^{(v)}$ in the first place and thus will never be in one or it is in a $S_{x}^{(v)}$ at the beginning but got deleted since it have lost its s-maximal identity. In any case, it will not be in an $S_{x}^{(v)}$ when extracting $F[v]$.
	
	The correctness of the key values in $S_{x}^{(v)}$'s is trivial since the key value of vertex $i$ is $s_i$ all along. Thus we have $S_{x}^{(v)}$ storing $i$ with key $s_i$ for each $i \in \smax_v$ and $\next_v(i)=x$ before extracting $F[v]$.\qed
\end{proof}

\begin{proposition}\label{prop:cor-H}
	When extracting $F[v]$, we have $H_{x}^{(v)}$ storing $i$ with key $\cost(x,i)$ if ($i=x$) or ($i \in \nmax_v$ and $\next_v(i)=x$).
\end{proposition}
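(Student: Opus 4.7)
The plan is to prove Proposition~\ref{prop:cor-H} by a case analysis on the pair $(x,i)$ that mirrors the proof of Proposition~\ref{prop:cor-S}, with the added complication that the keys stored in $H_x^{(v)}$ are values of $\cost$ rather than $s$, so both the set of elements and their keys must be tracked. I would split the argument into (I) a membership claim — for each candidate pair $(x,i)$, $i \in H_x^{(v)}$ exactly when the condition of the proposition holds — and (II) a key-value claim — when $i \in H_x^{(v)}$, its key equals $\cost(x,i)$.

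For membership I would treat four sub-cases. First, if $x = i = v$: the insertion at line~\ref{line:ins_i_Hfv} puts $v$ into $H_v^{(v)}$, and nothing in the subsequent $\Children(v)$ loop or in Update-s-maximal($v$) removes it. Second, if $x \in \smax_v$ and $x \neq v$: by Lemma~\ref{lem:once}(ii) we have $x \in \smax_c$ for the unique child $c$ whose subtree contains $x$, so $H_x^{(v)}$ is exactly the heap $H_x^{(c)}$ inherited from compute\_$F(c)$, modulo the window deletions performed by Over-weight($c$); since $s_x > s_v$, Update-s-maximal($v$) never touches $H_x^{(v)}$, and the window deletions remove exactly those descendants of $x$ that leave $\win_v$, which matches the definition by Lemma~\ref{lem:once}. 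Third, if $x = v$ and $i \neq v$: I would argue that every such $i$ must originate from some $T_c$ with a $T_c$-level ancestor $x'$ satisfying $\next_c(i) = x'$, and that $x'$ has lost its s-maximality in $T_v$; the transitive cascade described before Proposition~\ref{prop:cor-H} then guarantees that Update-s-maximal($v$) eventually melds $H_{x'}^{(v)}$ into $H_v^{(v)}$, carrying $i$ with it. Fourth, for $x \notin \smax_v$: either $H_x$ was never created in the recursion, or $x \in \smax_c \setminus \smax_v$ for some $c$, in which case Lemma~\ref{lem:s_v geq s_x} together with the cascade ensures $H_x^{(v)}$ is melded into $H_v^{(v)}$ and becomes empty.

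For the key-value claim, the initialization key $s_v + \sum_{c \in \Children(v)} F[c]$ equals $\cost(v,v)$ by direct expansion of $\chainsum(v,v)$. For inherited heaps (case two), keys $\cost(x,i)$ are preserved from the children by induction on the recursion. For merged heaps (cases three and four), I would verify the $\addall$ computation from subsection~\ref{sect:s-max}: the constant $g$ is derived so that, by Lemma~\ref{lem:cost substract}, $g = \cost(v,i) - \cost(x,i)$ for every $i$ in the heap being merged, uniformly over $i$, so after the $\addall$ the keys become $\cost(v,i)$, matching the definition of $H_v^{(v)}$.

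The main obstacle I anticipate is the cascading nature of case three. When a vertex $x''$ in some sub-sub-tree loses s-maximality after an intermediate ancestor $x'$ has already lost it, I need to show that the single $\addall$ applied at the moment $x''$ is processed — acting on the \emph{original} keys $\cost(x'', \cdot)$ in $H_{x''}^{(v)}$, since $H_{x''}^{(v)}$ is not touched by the earlier meld involving $H_{x'}^{(v)}$ — still produces the correct final keys $\cost(v, \cdot)$. I would handle this by inducting on the order in which vertices are processed inside Update-s-maximal($v$) and invoking Lemma~\ref{lem:cost substract} once per meld, confirming that the offset applied is indeed independent of $i$ and equals the value computed in subsection~\ref{sect:s-max}, so that the two invariants (membership and key values) are restored simultaneously after each individual meld in the cascade.
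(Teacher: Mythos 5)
Your proposal is correct and follows essentially the same route as the paper's own proof: restrict attention to $\win_v$ via the Over-weight deletions, split on whether the heap containing $i$ is inherited unchanged from a child or melded into $H_v^{(v)}$ during Update-s-maximal($v$) (using Lemma~\ref{lem:s_v geq s_x} and the cascade), and justify the key values through the $\addall$ offset $g=\cost(v,i)-\cost(x,i)$ supplied by Lemma~\ref{lem:cost substract}. One small bookkeeping note: in your third sub-case, an $i$ with $\next_v(i)=v$ that was itself s-maximal in $T_c$ lives in its own heap $H_i^{(c)}$ rather than in $H_{\next_c(i)}^{(c)}$, so it reaches $H_v^{(v)}$ via the meld of $H_i$ (your fourth sub-case) rather than being ``carried'' by $H_{x'}$ --- but your cases jointly cover this, so the argument goes through.
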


\begin{proof}
	First of all, we have exactly every $i\in\win_v$ stored in a certain $H_{x}^{(v)}$ before extracting $F[v]$. This is because $\win_v \subseteq \bigcup_{c\in \Children(v)}\win_c\cup\{v\}$, $v$ is inserted into $H_{v}^{(v)}$ and all nodes leaving $\win_v$ are deleted from all heaps by Proposition~\ref{prop:cor-W}.
	
	Similar to the proof of Proposition~\ref{prop:cor-S}, we can derive that: 
	
	(i)Suppose ($i \in \nmax_v \cap T_c$ and $\next_v(i)=\next_c(i)=x$) or ($i=x$ and $x\in\smax_c\cup\smax_v$) for a $c\in \Children(v)$. $H_{x}^{(v)}$ is not affected by the algorithm. Such $i$'s are properly stored in $H_{x}^{(v)}$.
	
	(ii)Suppose ($i \in \nmax_v \cap T_c$ and $\next_v(i)\neq\next_c(i)=x$)  or ($i=x$ and $x\in\smax_c\setminus\smax_v$) for a $c\in \Children(v)$. By Lemma~\ref{lem:s_v geq s_x}, $\next_v(i) = v$ and $s_v\geq s_x$, so we have $H_{x}^{(v)}$ melded into $H_{v}^{(v)}$ in Update-Update-s-maximal($v$). In this case, we have $\next_v(i) = v$ and $i$ is properly stored in $H_{v}^{(v)}$ before extracting $F[v]$.
	
	(iii)Suppose $i\notin \win_v$, $i$ is deleted in a call of Over-weight().
	
	The correctness of the key values follows from the updating procedure. In case (i), we the key values remain the same and thus are correct. In case (ii), the key values are updated to $\cost(v,i)$ according the description of Update-Update-s-maximal(). Thus, we have $H_{x}^{(v)}$ storing $i$ with key $\cost(x,i)$ if ($i=x$) or ($i \in \nmax_v$ and $\next_v(i)=x$).
\end{proof}

\begin{proposition}\label{prop:cor-bH}
	When extracting $F[v]$, we have $\bH^{(v)}$ storing $i=\text{find-min}(H_{x}^{(v)})$ with key $\cost(v,i)$, for each $x \in \smax_v$. 
\end{proposition}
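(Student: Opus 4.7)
The plan is to prove the proposition by induction on the height of $v$ in $T$, using the three earlier propositions for $v$ together with the inductive hypothesis that Proposition~\ref{prop:cor-bH} already holds at the end of each child's call compute\_$F(c)$. In the base case, $v$ is a leaf, $a=0$, the child loop and Update-s-maximal($v$) are no-ops, the final insert places $v$ into $\bH^{(v)}$ with key $s_v=\cost(v,v)$, and $\smax_v=\{v\}$, so the claim holds. For the inductive step I will trace $\bH^{(v)}$ chronologically through the child loop, through Update-s-maximal($v$), and finally through the closing insertion of $\findmin(H_v^{(v)})$.

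For the membership claim, after each Over-weight($c$) the invariant maintained in Section~\ref{sect:OB} guarantees that $\bH^{(c)}$ stores $\findmin(H_x^{(v)})$ for every $x\in\smax_c$ whose $H_x^{(v)}$ is non-empty. Melding all such $\bH^{(c)}$'s into $\bH^{(v)}$ accumulates one entry per non-empty $H_x^{(v)}$ with $x\in\bigcup_{c\in\Children(v)}\smax_c$. Update-s-maximal($v$) then, by Lemma~\ref{lem:s_v geq s_x}, demotes exactly the anchors $x\in\bigcup_c\smax_c\setminus\smax_v$, deleting their top entries from $\bH^{(v)}$ while melding their $H_x$'s into $H_v^{(v)}$; the closing insert supplies the entry for $x=v$. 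Combined with Proposition~\ref{prop:cor-H}, this yields exactly one entry $\findmin(H_x^{(v)})$ per $x\in\smax_v$, as required.

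For the key-value claim, fix any $x\in\smax_v$ and let $i^\star=\findmin(H_x^{(v)})$. If $x=v$, the final insert reads the key from $H_v^{(v)}$, which is $\cost(v,i^\star)$ by Proposition~\ref{prop:cor-H}. Otherwise $x\in T_c$ for a unique $c\in\Children(v)$, and $x\in\smax_c$ as well; by the inductive hypothesis $i^\star$ is stored in $\bH^{(c)}$ with key $\cost(c,i^\star)$ at the end of compute\_$F(c)$, and Over-weight($c$) preserves this (its replacement formula is an instance of Lemma~\ref{lem:cost substract}). The subsequent $\addall(\bH^{(c)},b)$ with $b=\sum_{c'\in\Children(v),\,c'\neq c}F[c']$ changes the key to $\cost(c,i^\star)+b$. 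Since $x\in\smax_c\cap\smax_v$ implies $\next_v(i^\star)=\next_c(i^\star)=x$, equation~(\ref{eq:cost}) yields
\[
\cost(v,i^\star)-\cost(c,i^\star)=\chainsum(v,i^\star)-\chainsum(c,i^\star)=b,
\]
because $T[v,i^\star]=\{v\}\cup T[c,i^\star]$ contributes exactly the $F[c']$ with $c'\neq c$ to the extra subtree sums in definition~(\ref{eq:chainsum}). Hence the resulting key is $\cost(v,i^\star)$, and the entry is untouched by Update-s-maximal($v$) since $x$ stays s-maximal.

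The main obstacle is disentangling the two offsets used by the algorithm: the $+b$ added to $\bH^{(c)}$ before melding accounts for the $\chainsum$-shift under the temporary assumption that $c$ is still s-maximal in $T_v$, while Update-s-maximal($v$) separately absorbs the $s_{\next(\cdot)}$-shift via $\addall$ on each demoted $H_x$. Lemma~\ref{lem:cost substract} is precisely what lets these two independent offsets compose into $\cost(v,i)$, since it decouples the $\chainsum$ change from the $s_{\next(\cdot)}$ change whenever the $\next$-ancestor is unchanged; for entries surviving in $\bH^{(v)}$ only the $\chainsum$ shift is ever needed, which is why the final $+b$ alone suffices for those.
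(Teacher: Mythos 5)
Your proof is correct and follows essentially the same route as the paper's: the same three-way case split on whether $x$ equals $v$, remains in $\smax_v$, or is demoted in Update-s-maximal($v$), with the inductive hypothesis on the children's $\bH^{(c)}$'s. You are somewhat more explicit than the paper in verifying that the offset $b=a-F[c]$ equals $\chainsum(v,i)-\chainsum(c,i)$ and hence $\cost(v,i)-\cost(c,i)$ for surviving anchors, a step the paper delegates to the prose of Section~\ref{sect:main}.
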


\begin{proof}
	 We discuss all $H_{x}^{(v)}$'s in three cases:
	
	(i)For each $x \in \smax_v\setminus\{v\}$, we have $\bH^{(c)}$ storing $i=\text{find-min}(H_{x}^{(v)})$ with key $\cost(c,i)$ for a $c\in \Children(v)$ before the algorithm. Since the key value of $i$ has been updated to $\cost(v,i)$ and $i$ is melded into $\bH^{(v)}$ in the main algorithm, we have such $i$ properly stored in $\bH^{(v)}$.
	
	(ii)For $x = v$, we insert $i=\text{find-min}(H_{v}^{(v)})$ into $\bH^{(c)}$ with key $\cost(v,i)$ directly in the main process.
	
	(iii)For $x \in \smax_c \setminus \smax_v$, we delete $i=\text{find-min}(H_{x}^{(v)})$ from $H_{v}^{(v)}$ when $H_{x}^{(v)}$ is melded into $H_{v}^{(v)}$ in Update-s-maximal($v$).
	
	As a result, we have $\bH^{(v)}$ storing $i=\text{find-min}(H_{x}^{(v)})$ with key $\cost(v,i)$, for each $x \in \smax_v$. 
\end{proof}

To sum up, we have all heaps meeting their definitions for $T_v$ when extracting $F[v]$, the correctness follows.

\section{Pseudocode for the algorithm}\label{app:alg}

\begin{algorithm}[h]
	\caption{Function Over-weight($c$). Line~\ref{line:(1:s)},\ref{line:(1:t1)} and \ref{line:(1:t2)} are procedure (i). From line~\ref{line:(2:s)} to~\ref{line:(2:t)} are (ii). Line~\ref{line:(4)} is (iii).}\label{alg:weight}
	\While{$\keyvalue(W^{(v)}$,$\findmax(W^{(v)}))>w_0$}{\label{line:search >w0}
		$x \leftarrow \findmax(W^{(v)})$\; 
		$\deletevertex(S_{\next_c(x)}^{(v)},x)$\;\label{line:(1:s)}
		\If{$\findmin(\bH^{(c)}) = x$}{\label{line:(2:s)}
			$g \leftarrow \keyvalue(\bH^{(c)},x)-\keyvalue(H_{\next_c(x)}^{(v)},x)$\;\label{line:(1:t)}
			$\deletevertex(H_{\next_c(x)}^{(v)},x)$\;\label{line:(1:t1)}
			$k \leftarrow \findmin(H_{\next_c(x)}^{(v)})$\;
			$\ins(\bH^{(c)},k,\keyvalue(H_{\next_c(x)}^{(v)},k)+g)$\;\label{line:cost(i,k_2)}
			$\deletevertex(\bH^{(c)},x)$\;\label{line:(2:t)}
		}
		\Else{
			$\deletevertex(H_{\next_c(x)}^{(v)},x)$\;\label{line:(1:t2)}
		}
		$\deletevertex(W^{(v)},x)$\;\label{line:(4)}
	}
\end{algorithm}

\begin{algorithm}[h]
	\caption{Update-s-maximal($v$)}\label{alg:phase 2}
	\While{\keyvalue($S_{v}^{(v)}$,$\findmin(S_{v}^{(v)}))\leq s_v$}{
		$x \leftarrow \findmin(S_{v}^{(v)})$\;

		$z \leftarrow \findmin(H_{x}^{(v)})$\;
		$g \leftarrow \keyvalue(\bH^{(v)},z)-\keyvalue(H_{x}^{(v)},z)+s_v-s_x$\;
		$\addall(H_{x}^{(v)},g)$\;
		$\meld(H_{v}^{(v)},H_{x}^{(v)})$\;
		$\deletevertex(\bH^{(v)},z)$\;

		$\meld(S_{v}^{(v)},S_{x}^{(v)})$\;
		$\deletevertex(S_{v}^{(v)},x)$\;	
	}
\end{algorithm}

\end{document}